\title{Distributed Quantum Proofs for Replicated Data}
\author{Pierre Fraigniaud}{IRIF, CNRS and Universit\'e de Paris, France}{}{}{Additional support from the ANR project \emph{DESCARTES}.}
\author{Fran\c{c}ois Le Gall}{Graduate School of Mathematics, Nagoya University, Japan}{}{}{JSPS KAKENHI grants Nos.~JP16H01705, JP19H04066, JP20H04139 and JP20H00579 and MEXT Quantum Leap Flagship Program Grant Number JPMXS0120319794.}
\author{Harumichi Nishimura}{Graduate School of Informatics, Nagoya University, Japan}{}{}{JSPS KAKENHI grants Nos.~JP16H01705, JP19H04066 
	and MEXT Quantum Leap Flagship Program Grant Number JPMXS0120319794.}
\author{Ami Paz}{Faculty of Computer Science, Universit\"at Wien, Austria}
{}{}{Supported by the Austrian Science Fund (FWF): P 33775-N, Fast Algorithms for a Reactive Network Layer.}
\authorrunning{P.\ Fraigniaud, F.\ Le Gall, H.\ Nishimura and A.\ Paz} 
\keywords{Quantum Computing, Distributed Network Computing, Algorithmic Aspects of Networks. } 
\newtheorem*{result}{Main Result}
\newcommand{\ket}[1]{\lvert #1 \rangle}
\newcommand{\tr}{\operatorname{tr}}
\newcommand{\EQ}{\textsf{EQ}}
\newcommand{\HAM}{\textsf{HAM}}
\newcommand{\dAM}{\textsf{dAM}}
\newcommand{\dMA}{\textsf{dMA}}
\newcommand{\dQMA}{\textsf{dQMA}}
\newcommand{\K}{z}
\newenvironment{psmallmatrix}{\left(\begin{smallmatrix}} {\end{smallmatrix}\right)}
\begin{document}

\date{}
\maketitle

\begin{abstract}
	This paper tackles the issue of \emph{checking} that all copies of a large data set replicated at several nodes of a network are identical. The fact that the replicas may be located at distant nodes prevents the system from verifying their equality locally, i.e., by having each node consult only nodes in its vicinity. On the other hand, it remains possible to assign \emph{certificates} to the nodes, so that verifying the consistency of the replicas can be achieved locally. 
	However, we show that, as the replicated data is large, classical certification mechanisms, including distributed Merlin-Arthur protocols, cannot guarantee good completeness and soundness simultaneously, unless they use very large certificates.
	The main result of this paper is a distributed \emph{quantum} Merlin-Arthur protocol enabling the nodes to collectively check the consistency of the replicas, based on small certificates, and in a single round of message exchange between neighbors, with short messages. 
	In particular, the certificate-size is logarithmic in the size of the data set, which gives an exponential advantage over classical certification mechanisms. 
	We propose yet another usage of a fundamental quantum primitive, 
	called the SWAP test, in order to show our main result.
\end{abstract}

%


\section{Introduction} 

In the context of distributed systems, the presence of faults potentially corrupting the individual states of the nodes creates a need to regularly check whether the system is in a global state that is legal with respect to its specification. A basic example is a system storing data, and using replicas  in order to support crash failures. In this case, the application managing the data is in charge of regularly checking that the several replicas of the same data, stored at different nodes scattered in the network, are all identical. Another example is an application maintaining a tree spanning the nodes of a network, e.g., for multicast communication. In this case, every node stores a pointer to its parent in the tree, and the application must regularly check that the collection of pointers forms a spanning tree. This paper addresses the issue of checking the correctness of a distributed system configuration at low cost. 

Several mechanisms have been designed for certifying the correctness of the global state of a system in a distributed manner. One popular mechanism is called \emph{locally checkable proofs}~\cite{GoosS16}, and it extends the seminal concept of \emph{proof-labeling schemes}~\cite{KormanKP10}. In these frameworks, the distributed application does not only construct or maintain some distributed data structure (e.g., a spanning tree), but also constructs a distributed \emph{proof} that the data structure is correct. This proof has the form of a \emph{certificate} assigned to each node (the certificates assigned to different nodes do not need to be the same). For collectively checking the legality of the current global system state, the nodes exchange their certificates with their neighbors in the network. Then, based on its own individual state, its certificate, and the certificates of its neighbors, every node accepts or rejects, according to the following specification. If the global state is legal, and if the certificates are assigned properly by the application, then all nodes accept. Conversely, if the global state is illegal, then at least one node rejects, \emph{no matter  which certificates are assigned to the nodes}. Such a rejecting node can raise an alarm, or launch a recovery procedure. The main aim of locally checkable proofs is to be \emph{compact}, that is, to use certificates as small as possible, for two reasons: first, to limit the space complexity at each node, and, second, to limit the message complexity of the verification procedure involving communications between neighbors.  

For instance, in the case of the Spanning Tree predicate, the application does not only construct a spanning tree $T$ of the network, but also a distributed proof that $T$ is indeed a spanning tree, i.e., that the collection~$T$ of pointers forms a cycle-free connected spanning subgraph. It has been known for long~\cite{AfekKY97,AwerbuchPV91,ItkisL94} that, by assigning to every node a certificate of logarithmic size, the nodes can collectively check whether $T$ is indeed a spanning tree, in a single round of communication between neighboring nodes. The certificate assigned to a node is the identity of the root of the tree, and its distance to this root (both are of logarithmic size as long as the IDs are in a range polynomial in the number of nodes). Every node just checks that it is provided with the same root-ID as all its neighbors in the network, and that the distance given to its parent in its certificate is one less than its own given distance --- a node with distance~0 checks that its ID is indeed the root-ID provided in its certificate. Obviously, if the collection $T$ of pointers forms a spanning tree, and if the certificates are assigned properly by the application, then all nodes pass these tests, and accept.  On the other hand, it is easy to check that if $T$ is not a spanning tree (it is not connected, or it contains a cycle), then at least one node detects a discrepancy and rejects, no matter which certificates are assigned to the nodes. 

Unfortunately, not all boolean predicates on labeled graphs can be distributedly certified using certificates as small as for spanning tree. This is typically the case of the aforementioned scenario of a distributed data storage using replicas, for which one must certify equality. Let us for instance consider the case of two nodes Alice and Bob at the two extremities of a path, that is, the two players are separated by intermediate nodes. Alice and Bob respectively store two $n$-bit strings $x$ and  $y$, and the objective is to certify that $x=y$. That is, one wants to certify equality (\EQ) between \emph{distant} players.  A direct reduction from the non-deterministic communication complexity of \EQ\/ shows that certifying \EQ\/ cannot be achieved with certificates smaller than $\Omega(n)$ bits. 

Randomization may help circumventing the difficulty of certifying some boolean predicates on labeled graphs using small certificates. Hence, a weaker form of protocols has been considered, namely \emph{distributed Merlin-Arthur} protocols (\dMA), a.k.a.~\emph{randomized proof-labeling schemes}~\cite{FraigniaudPP19}. In this latter context, Merlin provides the nodes with a proof, just like in locally checkable proofs, and Arthur performs a \emph{randomized} local verification at each node. Unfortunately, some predicates remain hard in this framework too. In particular, as we show in the paper, there are no classical \dMA\/ protocols for (distant) \EQ\/ using compact certificates.
Recently, several extensions of $\dMA$ protocols were proposed, e.g., by allowing more interaction between the prover and the verifier~\cite{CrescenziFP19,FraigniaudMORT19,NaorPY20}.
In this work, we add the quantum aspect, while considering only a single interaction, and only in the prescribed order: Merlin sends a proof to Arthur, and then there is no more interaction between them.

\subsection{Our Results} 
We carry on the recent trend of research consisting of investigating the power of quantum resources in the context of distributed network computing (cf., e.g., \cite{ElkinKNP14,GallM18,Izumi+PODC19,LeGall+STACS19,Izumi+STACS20,GavoilleKM09}), by designing a distributed Quantum Merlin-Arthur (\dQMA) protocol for distant \EQ, using compact certificates and small messages. 
While we use the \dQMA{} terminology in order to be consistent with prior work, we emphasize that the structure of the discussed protocols is rather simple: each node is given a quantum state as a certificate, the nodes exchange these states, perform a local computation, and finally accept or reject.

Our main result is the following. A collection of $n$-bit strings $x_1,\dots,x_t$ are stored at $t$ terminal nodes $u_1,\dots,u_t$ in a network $G=(V,E)$, where node~$u_i$ stores~$x_i$.  We denote $\EQ^t_n$ the problem of checking the equality $x_1=\dots=x_t$ between the $t$~strings.  
Let us define the \emph{radius} of a given instance of  $\EQ^t_n$ as $r=\min_i\max_{j} \mathsf{dist}_G(u_i,u_j)$, where $\mathsf{dist}_G$ denotes the distance in the (unweighted) graph~$G$. Our main result is the design of a \dQMA\/ protocol  for  $\EQ^t_n$, using small certificate.
This can be summarized by the following informal statement (the formal statement is in Section \ref{sec:general-graphs}): 

\begin{result}
	There is a distributed Quantum Merlin-Arthur (\dQMA) protocol for certifying equality  between $t$~binary strings ($\EQ^t_n$) of length~$n$, and located at a radius-$r$ set of $t$~terminals, in a single round of communication between neighboring nodes using certificates of size $O(t r^2 \log n)$ qubits, and messages of size 
	$O(t r^2 \log (n+r))$ qubits. 
\end{result}

It is worth mentioning that, although the dependence in~$r$ and $t$ is polynomial, the dependence in the actual size~$n$ of the instance remains logarithmic, which is our main concern. Indeed, for applications such as the aforementioned distributed data storage motivating the distant $\EQ^t_n$ problem, it is expected that both the number $t$ of replicas, and the maximum distance between the nodes storing these replicas are of several orders of magnitude smaller than the size~$n$ of the stored replicated data. 

It is also important to note that our protocol satisfies the basic requirement of \emph{reusability}, as one aims for protocols enabling   regular and frequent verifications that the data are not corrupted. Specifically, the quantum operations performed on the certificates during the local verification phase operated between neighboring nodes preserve the quantum nature of these certificates. That is, if $\EQ^t_n$ is satisfied, i.e., if all the replicas $x_i$'s are equal, then, up to an elementary local relocation of the quantum certificates, these certificates are available for a next test. If $\EQ^t_n$ is not satisfied, i.e., if there exists a pair of replicas $x_i\neq x_j$, then the certificates do not need to be preserved as this scenario corresponds to the case where the correctness of the data structure is violated, requiring the activation of recovery procedures for fixing the bug, and reassigning certificates to the nodes. 

Our quantum protocol is based on the SWAP test~\cite{BCWW01PRL}, which is a basic tool in the theory of quantum computation and quantum information. 
This test allows to check if a quantum state is symmetric, and has several applications, such as 
estimating the inner product of two states 
(e.g.,~\cite{BCWW01PRL,BT12CC,Yao03STOC}), 
checking whether a given state (or a reduced state of it) is pure or entangled with 
the environment system 
(e.g.,~\cite{ABDFS09ToC,KMY09CJTCS,HM13JACM,KLGN15SICOMP}), 
and more. 
In this paper, we use the SWAP test in 
yet another way: {\em for checking if two of the reduced states of a given state are close}. 
A similar use was done by Rosgen~\cite{Rosgen08} in a different context --- transforming quantum circuits to shallow ones in a hardness reduction proof.

Finally, observe that our logarithmic upper bound for $\dQMA$ protocols is in contrast to the linear lower bound that can be shown for classical $\dMA$ protocols even for $t=2$ on a path of 4 nodes and even for the case where communication between the neighboring nodes is extended to multiple rounds (see precise statement and proof in Section~\ref{sec:classicLWB}). Our results thus show that quantum certification mechanism can provide an exponential advantage over classical certification mechanisms.

\subsection{Related Work} 

The concept of distributed proofs is a part of the framework of distributed network computing since the early works on fault-tolerance (see, e.g., \cite{AfekKY97,AwerbuchPV91,ItkisL94}). 
Proof-labeling schemes were introduced in~\cite{KormanKP10}, and variants have been studied in~\cite{GoosS16,FraigniaudKP13}. Randomized proof-labeling schemes have been studied in~\cite{FraigniaudPP19}. Extensions of distributed proofs to a hierarchy of decision mechanisms have been studied in~\cite{FeuilloleyFH16} and~\cite{BalliuDFO18}. Frameworks like cloud computing recently enabled envisioning systems in which the nodes of the network could interact with a third party, leading to the concept of \emph{distributed interactive proofs}~\cite{KolOS18}. There, each node can interact with an  \emph{oracle} who has a complete view of the system, is computationally unbounded, but is not trustable. For instance, in Arthur-Merlin (\dAM) protocols, the nodes start by querying the oracle Merlin, which provides them with answers in their certificates. There is a simple classical compact \dAM\/ protocol for distant $\EQ$, where the two players stand at the extremities of a  path (see Section~\ref{sec:overview}). 
We refer to~\cite{CrescenziFP19,FraigniaudMORT19,NaorPY20} for recent developments in the framework of distributed interactive proofs. 
While distributed Arthur-Merlin protocols and their extensions provide an appealing theoretical framework for studying the power of interactive proofs in the distributed setting, the practical implementation of such protocols remains questionable, since 
they all require the existence of a know-all oracle, Merlin, and it is unclear if a Cloud could play this role.
On the other hand, in $\dMA$ and $\dQMA$ protocols, interaction with an external party is not required, but only a one-time assignment of certificates is needed, which are then reusable for regular verification.
As in the classical proof-labeling schemes setting, these certificates can  actually be \emph{created} by the nodes themselves during a pre-processing phase, making the reliance on a know-all oracle unnecessary.

After a few early works~\cite{Ben-Or+STOC05, ElkinKNP14,GavoilleKM09,Tani+12} that shed light on the potential and limitations of quantum distributed computing (see also \cite{Arfaoui+14,Broadbent+08,Denchev+08} for general discussions), evidence of the advantage of quantum distributed computing over classical distributed computing have been obtained recently for three fundamental models of (synchronous fault-free) distributed network computing:
the \textsf{CONGEST} model \cite{Izumi+STACS20,GallM18},
the \textsf{CONGEST-CLIQUE} model \cite{Izumi+PODC19}
and the \textsf{LOCAL} model \cite{LeGall+STACS19}.
The present paper adds to this list another important task for which quantum distributed computing significantly outperforms classical distributed computing, namely, distributed certification. 

Note that while this paper is the first to study quantum Merlin-Arthur protocols in a distributed computing framework, there are a number of prior works studying them 
in communication complexity \cite{RS04CCC,Kla11CCC,KP14MFCS,BGK15MFCS}. 
In particular, quantum Merlin-Arthur protocols are shown to 
improve some computational measure 
(say, the total length of the messages from the prover to Alice, and of the
messages between Alice and Bob) 
exponentially compared to Merlin-Arthur protocols where 
the messages from the prover are classical \cite{RS04CCC,KP14MFCS}. 

The question of computing functions on inputs that are given to graph nodes was also studied in the context of communication complexity. 
The equality function was studied for the case where all nodes have inputs~\cite{AlonES17}. Other works considered a setting similar to ours, i.e., where only some nodes have inputs~\cite{ChattopadhyayRR14,ChattopadhyayR15}, but did not study the equality problem.


\section{Model and Definitions}\label{sec:prelim}

\subparagraph*{Distributed verification on graphs.}

Let $t\geq 2$, and let $f\colon (\{0,1\}^n)^t \to\{0,1\}$ be a function. The aim of the nodes is to collectively decide whether $f(x_1,\dots,x_t)=1$ or not, where $x_1,\dots,x_t$ are assigned to $t$ nodes of a graph. Specifically, an instance of the problem~$f$ is a $t$-tuple $(x_1,\dots,x_t)\in \{0,1\}^n\times\dots\times\{0,1\}^n$, a connected graph $G=(V,E)$, and an ordered sequence $v_1,\dots,v_t$ of distinct nodes of~$G$. The node~$v_i$ is given $x_i$ as input, for $i=1,\dots,t$. All the other nodes receive no inputs. 
We consider distributed Merlin-Arthur ($\dMA$) protocols for deciding whether  $f(x_1,\dots,x_t)=1$, in which a non-trustable \emph{prover} (Merlin) assigns (or ``sends'') \emph{certificates} to the nodes, and then the nodes (Arthur) perform a 1-round randomized verification algorithm.
The verification algorithm consists of each node simultaneously sending messages to all its immediate neighbors, receiving messages from them, then performing a local computation, and finally accepting or rejecting locally.\footnote{We can naturally extend this definition to define $\dMA$ protocols with $\mu$ rounds of communication among neighbors, for any integer $\mu\ge 1$. In this paper, however, we focus on the case $\mu=1$ since all the protocols we design use only 1-round verification algorithms. The only exception is Section~\ref{sec:classicLWB}, where we show classical lower bounds that hold even for $\mu>1$.} We say that a $\dMA$ protocol has \emph{completeness}~$a$ and \emph{soundness}~$b$ for a function~$f$ if the following holds for every $(x_1,\dots,x_t)\in \{0,1\}^n\times\dots\times\{0,1\}^n$, every connected graph~$G$, and every ordered sequence $v_1,\dots,v_t$ of distinct nodes in~$G$:

\begin{description}
	\item[{(completeness)}] if $f(x_1,\dots,x_t)=1$, then  the prover can assign certificates to the nodes such that $\Pr[\mbox{all nodes accept}]\geq a$; 
	\item[{(soundness)}] if $f(x_1,\dots,x_t)=0$, then, for every certificate assignment  by the prover, $\Pr[\mbox{all nodes accept}]\leq b$.  
\end{description}

The completeness condition guarantees that, when the system is in a ``legal'' state (specified by $f(x_1,\dots,x_t)=1$), with probability at least~$a$ all nodes accept. The soundness condition guarantees that, when the system is in an ``illegal'' state (specified by $f(x_1,\dots,x_t)=0$), with probability at least~$1-b$ at least one node rejects. 
The value $b$ represents the error probability of the protocol on an illegal instance, and thus we sometimes refer to it as the \emph{soundness error}.
A node detecting illegality of the state can raise an alarm, or launch a recovery procedure. Protocols with completeness~1 are called 1-sided protocols, or protocols with perfect completeness.
Similarly to prior works on distributed verification, the certificate size of the protocol 
is measured as the maximum size (over all the nodes of the network) of the certificate sent by the prover to one of the nodes, and the message size of the protocol is measured as the maximum size (over all pairs of adjacent nodes) of the message exchanged between two adjacent nodes.
Specifically, we will consider the multi-party version of the equality function, 
$\EQ^{t}_{n}$, which is the boolean-valued function from 
$(\{0,1\}^n)^t$ such that 
$
\EQ^{t}_n(x_1,\ldots,x_t)=1 \iff x_1=\cdots=x_t.
$

In this work, we extend the framework of \dMA~protocols, to consider also cases where the certificates given to the nodes can contain qubits (although they may also contain classical bits) and the nodes can exchange messages consisting of qubits. These will be called \emph{distributed Quantum Merlin-Arthur} (\dQMA) protocols. More precisely, in a \dQMA~protocol for a function $f$, a non-trustable prover first sends a certificate to each node, which consists of a quantum state and classical bits; the quantum states may be entangled, even though all our quantum protocols do not require any prior entanglement, nor any shared classical random bits. Then the nodes perform a 1-round quantum verification algorithm, where each node simultaneously sends a quantum message to all its immediate neighbors, receives quantum messages from them, then performs a local computation, and finally accepts or rejects locally.
Note that, as opposed to the classical setting, we cannot assume that a node simply broadcasts its certificate to all its neighbors, as quantum states cannot be duplicated. However, a node can still send copies of the classical parts of the certificate.
We define completeness and soundness of \dQMA\/ protocols as for \dMA\/ protocols.

\subparagraph*{Remark.}

A special case of interest is when the graph $G$ is a path $v_0,\ldots,v_r$, $r\geq 1$, 
where the left-end node $v_0$ has an $n$-bit string~$x$ as input, the right-end node $v_r$ has an $n$-bit string~$y$ as input, 
and the intermediate nodes $v_1,\ldots,v_{r-1}$ have no inputs. That is, $t=2$. Given a function $f\colon \{0,1\}^n\times \{0,1\}^n\to\{0,1\}$, the aim of the nodes is to collectively decide whether $f(x,y)=1$ or not. This setting is very much related to communication complexity.

\subparagraph*{Classical two-party communication complexity.}

We refer to~\cite{KN97book} for the basic concepts of two-party communication complexity. In this paper we will only consider two-party one-way communication complexity. In this model two parties, denoted Alice and Bob, each receives an input $x\in \{0,1\}^n$ and~$y\in \{0,1\}^n$, respectively. The goal is for Bob to output the value $f(x,y)$ for some known Boolean function $f\colon \{0,1\}^n\times \{0,1\}^n\rightarrow \{0,1\}$. Only Alice can send a message to Bob. The one-way two-sided-error communication complexity of $f$ is the minimum number of bits that have to be sent on the worst input in a protocol that outputs the correct answer with probability at least 2/3. The one-way one-sided-error communication complexity of $f$ is the minimum number of bits that have to be sent on the worst input in a protocol that outputs the correct answer with probability~1 on any $1$-input, and outputs the correct answer with probability at least~2/3 on any $0$-input.

We shall especially consider the following two functions. The equality function $\EQ_n$ is defined as $\EQ_n(x,y)=1$ when $x=y$ and $\EQ_n(x,y)=0$ otherwise, for any $x,y\in\{0,1\}^n$. Its one-way one-sided-error communication complexity is $O(\log n)$ --- see, e.g.,~\cite{KN97book}. For any integer $d\ge 0$, the Hamming distance function $\HAM_{n}^d$ is defined as follows: for any $x,y\in \{0,1\}^n$, $\HAM_{n}^d(x,y)=1$ if the Hamming distance between $x$ and $y$ is at most $d$, and $\HAM_n^d(x,y)=0$ otherwise.  It is known~\cite{Yao03STOC} that, for $d$ constant, the one-way two-sided-error communication complexity of $\HAM_{n}^d$ is $O(\log n)$.

For any Boolean function $f: \{0,1\}^n\times \{0,1\}^n\rightarrow \{0,1\}$, 
a set $S\subseteq \{0,1\}^n\times \{0,1\}^n$ is a \emph{$1$-fooling set} for $f$ 
if, on the one hand, for every $(x,y)\in S$, $f(x,y)=1$,  and, on the other hand, for every two pairs $(x_1,y_1)\neq (x_2,y_2)$ in $S\times S$, $f(x_1,y_2)=0$ or $f(x_2,y_1)=0$. 

\subparagraph{Quantum two-party communication complexity.}

We assume the reader is familiar with the basics of quantum computation, 
in particular the notion of qubits, Dirac notation such as $|\psi\rangle$ and $\langle\psi|:=(|\psi\rangle)^\dagger$, and the quantum circuit model (see Sections 2 and 4 in Ref.~\cite{NC00book}, for instance). In Appendix~\ref{sec:QuantFund} we present more advanced
concepts such as mixed states that will be used in some of our proofs.

Quantum two-party communication complexity, first introduced by Yao \cite{YaoFOCS93}, is defined similarly to the classical version. The only difference is that the players are allowed to exchange qubits instead of bits (the cost of a quantum protocol is the number of qubits sent by the protocol). 
Note that since quantum protocols can trivially simulate classical protocols, the quantum communication complexity of a function is never larger than its classical communication complexity. 
More precisely, an $m$-qubit one-way quantum protocol $\pi$ for the function $f$ can be described in its most general form as follows. Alice prepares an $m$-qubit (pure) quantum state $\ket{h_x}$ and sends it to Bob.\footnote{Without loss of generality, we assume that Alice does not use any mixed state (i.e., a probability distribution on pure states) in her message, as she can simulate it using a pure state called the \emph{purification}~\cite{NC00book} whose length is at most twice the one of the mixed state.} Bob then makes a measurement on the state $\ket{h_x}$, which gives an outcome $b\in\{0,1\}$. Finally, Bob outputs~$b$. Since Bob's measurement in the above description depends only on his input~$y$, it can be mathematically described, for each $y\in\{0,1\}^n$, by two positive semi-definite matrices $M_{y,0}$ and $M_{y,1}$ such that $M_{y,0}+M_{y,1}=I$. This pair $\{M_{y,0},M_{y,1}\}$ is called a POVM measurement (POVM measurements are the most general form of measurements allowed by quantum mechanics). 
If $\ket{h_x}$ is measured by the POVM $\{M_{y,0},M_{y,1}\}$, the probability that $b=0$ is $\tr(M_{y,0}(|h_x\rangle\langle h_x|))$, while the probability that $b=1$ is $\tr(M_{y,1}(|h_x\rangle\langle h_x|))$.

\section{General Overview of our Techniques} 
\label{sec:overview}

Let us provide an intuition of our protocol in the case of $\EQ^2_n$ over a path $v_0,\ldots,v_r$ of length $r\ge 1$ in which the terminals are the two nodes $v_0$ and~$v_r$ (that we rename Alice and Bob, for convenience). Let us call $x$ and $y$ the $n$-bit strings owned by Alice and Bob, respectively. There is a simple classical protocol for distant equality in a somewhat similar setting, where the verifier (Arthur, consisting of all the graph nodes) can send random bits to the prover (Merlin) before receiving the certificates; this is called a $\dAM$ protocol. In this protocol, Alice picks a hash function~$h$ at random in an appropriate family of hash functions (i.e., a family such that both $h$ and $h(x)$ can be encoded using  $O(\log n)$ bits and such that the probability that $h(x)\neq h(y)$ is high when $x\neq y$). Merlin provides every node with the certificate $(h,h(x))$, each node checks it received the same certificates as its neighbors, and Bob additionally checks whether $h(x)=h(y)$. Obviously, one cannot switch the order of Arthur and Merlin, as letting Merlin choose the hash function would enable him to fool Arthur on illegal instances by picking $h$ that hashes identically the distinct input strings $x$ and~$y$. The main idea of our $\dQMA$ protocol is to ask Merlin to provide the nodes with a quantum certificate consisting of the \emph{quantum superposition} of all the possible hashes. 

Entering slightly more into the details, for any $x\in \{0,1\}^n$ we consider the  
{\em quantum fingerprint} $|h_x\rangle=\frac{1}{\sqrt{K}}\sum_{h}|h\rangle|h(x)\rangle$, where the sum is over all the hash functions, and $\frac{1}{\sqrt{K}}$ is the normalization factor of the quantum state. By using the same family of hash functions as in the aforementioned $\dAM$ protocol, these fingerprints can be constructed in such a way that their length is $O(\log n)$ qubits, and $|h_x\rangle$ and $|h_y\rangle$ are very far (more precisely, almost orthogonal) when $x\neq y$. Checking whether the two quantum fingerprints $|h_x\rangle$ and $|h_y\rangle$ 
are either equal or far apart can be achieved by a quantum test called the SWAP test~\cite{BCWW01PRL}.
Formally, the probability that the SWAP test accepts  is $1/2+|\langle h_x|h_y\rangle|^2/2$, where $\langle h_x|h_y\rangle$ denotes the inner product between the two quantum states $|h_x\rangle$ and $|h_y\rangle$.

Let us now describe the outline of our $\dQMA$ protocol. In the protocol each intermediate node $v_1,\ldots,v_{r-1}$ expects to receive the quantum fingerprint $|h_x\rangle$. Alice, who does not receive any certificate, creates by herself the fingerprint $|h_x\rangle$, which depends only on $x$. Similarly, Bob creates by himself the fingerprint $|h_y\rangle$. The checking procedure simply checks whether all these $(r+1)$ fingerprints are equal. This is done by applying the SWAP test to check whether the fingerprints owned by adjacent nodes are equal or not. There are however a few subtleties. In particular, since our analysis crucially requires that the SWAP tests do not overlap, for each node we need to decide whether it will perform the SWAP test with its right neighbor or its left neighbor. We do it in a randomized way and deal carefully with the conflicting choices that appear. For the case $x=y$ all the SWAP tests then succeed with probability 1 and thus all the nodes accept.

For the case $x\neq y$,  let us provide some intuition about why a prover cannot fool the nodes for convincing them to all accept. To simplify the description we assume below that $|h_x\rangle$ and $|h_y\rangle$ are orthogonal (instead of only almost orthogonal).
If the prover was forced to send certificates restricted to {\em product states} of the form $|g_1\rangle\otimes|g_2\rangle\otimes \cdots |g_{r-1}\rangle$ 
where $|g_j\rangle$ is the state to the $j$th node, then a fairly straightforward argument would guarantee that, with large probability, at least one node rejects. 
Indeed, under the product states restriction, intuitively 
the best  strategy for the prover to cheat is to send 
states ``intermediate'' between $|h_x\rangle$ and $|h_y\rangle$,
namely, to send the state $|g_j\rangle=\cos(\pi j/2r)|h_x\rangle+\sin(\pi j/2r)|h_y\rangle$ to node $v_j$ for each $j\in\{1,\ldots,r-1\}$. 
Then, the probability that all nodes accept when performing the SWAP tests would be roughly 
$\prod_{j=1}^{r-1}(1/2+|\langle g_j|g_{j+1}\rangle|^2/2)=1-\Omega(1/r)$.
The cheating prover could then be caught with probability $\Omega(1/r)$, 
and this probability can be amplified to $\Omega(1)$ by asking the prover to send several copies of the certificates (amplification is possible since our protocol has perfect completeness).

The formal analysis of the protocol however faces several difficulties, which are mostly due to the nature of quantum computation, and are especially challenging to handle in the framework of distributed computation. For instance, quantum states cannot be duplicated (the ``no-cloning Theorem''), which implies that a same quantum state cannot be used for parallel tests. Additionally, even sequential tests face the difficulty that the first test may collapse the quantum state, making the second test impossible to perform (or at least significantly complicating the analysis of the second test). Thus node $v_i$ cannot perform the SWAP test with its two neighbors $v_{i-1}$ and $v_{i+1}$ simultaneously and (as already mentioned) we have to design carefully the protocol so that the SWAP tests do not overlap. A second, and much more problematic issue is that the non-trustable Merlin can send arbitrary certificates to the nodes for fooling them. In particular it is not restricted to send certificates that are product states. A priori, it may seem that the SWAP test is not strong enough to handle fooling  strategies beyond product states. In this work we show that the SWAP test can actually detect such fooling strategies. 

Specifically, our approach consists in considering the so-called  \emph{reduced states}, 
and to establish the following property of the SWAP test (cf. Lemma~\ref{lem:swaptest_close} in Section~\ref{sec:QDPP}). If the SWAP test accepts with high probability when applied on the part of any two adjacent nodes 
in a (possibly non-product) global quantum state resulting from the certificates, 
then the two reduced states of that part (which is a bipartite state) must be close.
As the two states $|h_x\rangle$ and $|h_y\rangle$ are very far apart when $x\neq y$, 
we can thus use this result to show that there is a good probability that the SWAP test 
rejects at some node.  Moreover, using reduced states allows us to overcome other technical difficulties in the analysis of the (non-overlapping) SWAP tests we consider. Indeed, some form of average-case success probability of all the SWAP tests can be considered, instead of having to argue about the probability that all the SWAP tests individually accept.


\section{Quantum Distributed Proofs on Paths}
\label{sec:QDPP}

In this section we restrict ourselves to the case of a path $v_0,\ldots,v_r$ of length $r\geq 1$, in which only the two extremities $v_0$ and $v_r$ are given inputs. This framework allows us to elaborate our main technique, that will be extended to arbitrary graphs in Section \ref{sec:general-graphs}. Let $x\in\{0,1\}^n$ be the input to $v_0$, and $y\in\{0,1\}^n$ be the input to $v_r$. Our goal is to design a $\dQMA$ protocol to decide whether $f(x,y)=1$ or not, for some given Boolean function on $\{0,1\}^n\times \{0,1\}^n$. 

We show the following general theorem
that converts a one-way quantum communication complexity protocol into a quantum Merlin-Arthur protocol for the corresponding long-distance problem on the path. This theorem applies not only to one-sided-error protocols, but also to the two-sided-error case (with a logarithmic additional factor in the complexity).

\begin{theorem}
	\label{th:general-one-two-sided} 
	Let $f\colon\{0,1\}^n\times\{0,1\}^n\to \{0,1\}$ be a Boolean function.
	\begin{itemize}
		\item 
		If $f$ has a quantum one-way one-sided-error communication protocol transmitting at most $q$ qubits, then there exists a 1-sided distributed quantum Merlin-Arthur protocol for $f$ on the path of length~$r$, with soundness~$1/3$, using certificates of size $O(r^2q)$ qubits, and exchanging messages of length $O(r^2(q+\log r))$ qubits. 
		
		\item
		If $f$ has a quantum one-way two-sided-error communication protocol transmitting at most $q$ qubits, then, for any constant $c$, there exists a distributed quantum Merlin-Arthur protocol for $f$ on the path of length~$r$  with completeness $1-1/n^c$, soundness~$1/3$, using certificates of size $O(r^2q \log (n+r))$ qubits, 
		and exchanging messages of length $O(r^2q \log (n+r))$ qubits. 
	\end{itemize}
\end{theorem}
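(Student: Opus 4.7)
The plan is to lift the given one-way quantum communication protocol for $f$ into a distributed Merlin-Arthur protocol on the path $v_0,\ldots,v_r$, following the template of the $\EQ$-specific protocol sketched in Section~\ref{sec:overview}. Let $|h_x\rangle$ denote Alice's $q$-qubit message and $\{M_{y,0},M_{y,1}\}$ Bob's POVM in the given one-way protocol. Alice ($v_0$), who holds $x$, can prepare $|h_x\rangle$ by herself; Bob ($v_r$) eventually needs access to a copy of $|h_x\rangle$ in order to apply his POVM. Merlin's role is to distribute many copies of $|h_x\rangle$ to every intermediate node and to Bob, and the verifier's role is to run SWAP tests between adjacent nodes so as to detect any deviation from this configuration.

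In the basic, one-sided-error construction, the certificate sent by Merlin to each node is the tensor product of $N=\Theta(r^2)$ independent ``basic copies'' of $q$ qubits each, each supposed to hold $|h_x\rangle$; Bob additionally receives one measurement register, cross-verified by a SWAP test against one of the other copies. For each basic copy, the node independently tosses a coin: ``right'' means forward the register to the right neighbor, ``left'' means keep it and SWAP-test it against the register received from the left neighbor. Alice always plays ``right'' (sending her self-produced $|h_x\rangle$), Bob always plays ``left'', and any received register with no kept partner is discarded. In the honest case every register contains $|h_x\rangle$, every SWAP test that actually occurs accepts with probability $1$, and Bob's POVM outputs $1$ with probability $1$, so the protocol has perfect completeness. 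The certificate size is $N\cdot q=O(r^2 q)$ qubits, and each message consists of at most $N$ registers of $q$ qubits together with their $O(\log N)=O(\log r)$-bit coin labels, for a total length of $O(r^2(q+\log r))$ qubits.

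For soundness on a NO-instance, I would view the certificate as a (possibly entangled) global state $|\Phi\rangle$ and analyze its single-register marginals. The core tool is Lemma~\ref{lem:swaptest_close}, which guarantees that if a SWAP test on a bipartite state accepts with probability at least $1-\epsilon$, then its two marginals lie within trace distance $O(\sqrt{\epsilon})$ of each other. With $N=\Theta(r^2)$ parallel basic protocols, each edge of the path is tested by a constant fraction of them, which by a Chernoff bound amounts to $\Theta(r^2)$ tests per edge with overwhelming probability. Assume toward contradiction that the protocol accepts with probability greater than $1/3$; then, averaging over the coin tosses, every edge must pass an overwhelming fraction of its SWAP tests, so Lemma~\ref{lem:swaptest_close} forces the marginals of adjacent nodes to be within trace distance $O(1/r)$. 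Chaining this along the $r$ edges of the path by the triangle inequality yields that Bob's measurement register is within constant trace distance of $|h_x\rangle\langle h_x|$, which by the soundness of the underlying one-way protocol contradicts $f(x,y)=0$.

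The two-sided-error case is handled by a further amplification layer around Bob: Merlin provides $\Theta(\log(n+r))$ independent measurement registers, each certified by its own SWAP tests exactly as above, and Bob outputs the majority of his POVM outcomes. A Chernoff bound drives the completeness error below $1/n^c$ while keeping the soundness at $1/3$, introducing the stated $O(\log(n+r))$ factor in both the certificate and the message lengths. The main obstacle throughout is the soundness analysis against an adversarial Merlin who may entangle all of the certificates: a single SWAP test provides only a weak bipartite-closeness guarantee via Lemma~\ref{lem:swaptest_close}, so the per-edge trace-distance error has to be squeezed down to $O(1/r)$ by $\Theta(r^2)$-fold amplification in order for $r$ triangle-inequality applications to still deliver a constant bound on the distance between Alice's and Bob's marginals.
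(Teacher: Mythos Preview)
Your protocol is essentially ${\cal P}_\pi[k]$ from the paper, and the completeness analysis and size bounds are fine. The gap is in the soundness argument. From ``$\Pr[\text{all tests accept}]>1/3$'' you cannot deduce that the marginals of adjacent nodes are within trace distance $O(1/r)$. All that follows directly is that each individual SWAP test, when it occurs, accepts with probability greater than $1/3$; via Lemma~\ref{lem:swaptest_close} this only yields a constant trace-distance bound on that particular pair of reduced states, and chaining $r$ such constants along the path is useless. Having $\Theta(r^2)$ tests per edge does not help either: those tests act on \emph{different} register pairs $(R_{j-1,i},R_{j,i})$ whose marginals the prover may choose differently across copies~$i$, so there is no single pair of marginals that many tests jointly constrain. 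In short, the strategy ``amplify first, then analyze'' does not go through as written.

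The paper reverses the order. It first analyzes a \emph{single} basic copy: writing $\alpha_j$ for the conditional rejection probability at~$v_j$, Lemma~\ref{lem:swaptest_close} gives $\mathsf{dist}(\rho_{j-1},\rho_j)\le 3\sqrt{\alpha_j}$, the triangle inequality together with the soundness of~$\pi$ at~$v_r$ forces $\sum_j\sqrt{\alpha_j}=\Omega(1)$, and a Cauchy--Schwarz step converts this to $\sum_j\alpha_j=\Omega(1/r)$ (this is Lemma~\ref{lem:soundness}). Averaging over the coins then yields single-copy rejection probability $\Omega(1/r^2)$. Only at this point is amplification invoked: because the coins of different copies are independent, the averaged accepting operator of ${\cal P}_\pi[k]$ tensorises across copies, and the standard QMA parallel-repetition bound gives soundness error at most $(1-\Omega(1/r^2))^k<1/3$ for $k=\Theta(r^2)$. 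For the two-sided case the paper amplifies~$\pi$ itself (by $O(\log(n+r))$-fold majority) to error $O(1/(n^c r^2))$ \emph{before} plugging it into the construction; your ``majority at Bob'' variant is plausible but inherits the same unfinished soundness argument.
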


Using known results (cf. Section~\ref{sec:prelim}) about one-way communication complexities of $\EQ_n$ and $\HAM_{n}^d$, the following  two results are direct applications of Theorem~\ref{th:general-one-two-sided}.

\begin{corollary}
	There exists a one-sided quantum Merlin-Arthur protocol for $\EQ_n$ in the path of length~$r$  with soundness $1/3$, using certificates of size $O(r^2\log n)$ qubits, and exchanging messages of length $O(r^2 \log (n+ r))$ qubits.\footnote{Here we are using the fact that $\log n+\log r$ is of the same order as $\log(n+r)$ for conciseness.}
\end{corollary}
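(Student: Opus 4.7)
The plan is to obtain this corollary as an immediate instantiation of the first bullet of Theorem~\ref{th:general-one-two-sided}. First I would recall from Section~\ref{sec:prelim} that $\EQ_n$ admits a classical one-way one-sided-error communication protocol with $O(\log n)$ bits, via the standard fingerprinting construction: Alice samples a hash function $h$ from an appropriate family, sends the pair $(h, h(x))$ to Bob, who accepts iff $h(y) = h(x)$. This protocol has perfect completeness (equal inputs always give equal hashes) and bounded soundness error. Since any classical communication protocol is a special case of a quantum one with the same communication cost, $\EQ_n$ has a quantum one-way one-sided-error communication protocol with $q = O(\log n)$ qubits of communication.

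Applying the first bullet of Theorem~\ref{th:general-one-two-sided} with $q = O(\log n)$, I immediately obtain a one-sided (perfect completeness) distributed quantum Merlin-Arthur protocol for $\EQ_n$ on the path of length~$r$, with soundness~$1/3$, certificates of size $O(r^2 q) = O(r^2 \log n)$ qubits, and messages of length $O(r^2(q + \log r)) = O(r^2(\log n + \log r)) = O(r^2 \log(n+r))$ qubits, matching the claimed bounds. No further argument is needed: the only nontrivial content lies in Theorem~\ref{th:general-one-two-sided} itself, whose proof is developed separately in this section, and in the classical upper bound on the communication complexity of $\EQ_n$, which is standard. There is therefore no real obstacle to overcome in proving the corollary beyond correctly instantiating the theorem.
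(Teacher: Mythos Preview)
Your proposal is correct and matches the paper's approach exactly: the paper states that this corollary is a direct application of Theorem~\ref{th:general-one-two-sided} using the known $O(\log n)$ one-way one-sided-error communication complexity of $\EQ_n$ from Section~\ref{sec:prelim}. Your write-up simply spells out the instantiation in slightly more detail than the paper does.
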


\begin{corollary}
	For any $c>0$ and $d>0$, there exists a quantum Merlin-Arthur protocol for $\HAM_{n}^d$ in the path of length~$r$  with completeness $1-1/n^c$, soundness $1/3$, using certificates of length $O(r^2(\log n)\log (n+r))$ qubits, 
	and exchanging messages of length $O(r^2(\log n)\log (n+r))$ qubits.  
\end{corollary}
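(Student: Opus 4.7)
The plan is to obtain this corollary as a direct instantiation of the two-sided-error branch of Theorem~\ref{th:general-one-two-sided}, by plugging in a known one-way quantum communication protocol for $\HAM_n^d$. Concretely, the paper already records (in Section~\ref{sec:prelim}) that for any fixed $d$, the one-way two-sided-error communication complexity of $\HAM_n^d$ is $q = O(\log n)$, via Yao's fingerprinting-style protocol~\cite{Yao03STOC}. Since the parameters $c$ and $d$ are both fixed constants independent of $n$ and $r$, I treat the hidden constant factor depending on $d$ as part of the big-$O$ constant.

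Given this, the plan is simply: fix $c>0$ and $d>0$; take as input to the second bullet of Theorem~\ref{th:general-one-two-sided} Yao's one-way two-sided-error protocol for $f=\HAM_n^d$ with $q=O(\log n)$ qubits of communication. The theorem then produces a $\dQMA$ protocol for $\HAM_n^d$ on the path of length $r$ with completeness $1-1/n^c$, soundness $1/3$, certificate size
\[
O(r^2 q \log(n+r)) \;=\; O\!\left(r^2 (\log n)\log(n+r)\right)
\]
qubits, and message size of the same order. This matches the statement of the corollary verbatim.

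The only things to verify are bookkeeping-style: first, that Yao's result really gives a \emph{one-way} protocol (the two-sided branch of the theorem requires this), which it does, since the fingerprint is computed by Alice and sent in a single message to Bob, who performs the decision measurement; and second, that the $d$-dependence hidden inside the $O(\log n)$ one-way cost and the $c$-dependence in the completeness parameter $1-1/n^c$ do not interact badly with the reduction of Theorem~\ref{th:general-one-two-sided}. Since $c$ and $d$ are treated as constants in the corollary's statement, both dependencies are absorbed into the implicit constants of the big-$O$ expressions in the final bounds.

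No obstacle of substance arises here: the corollary is a plug-in consequence of Theorem~\ref{th:general-one-two-sided} once one invokes the known one-way two-sided-error quantum upper bound for $\HAM_n^d$. If anything, the minor subtlety is to note explicitly that the cost blow-up from ``two-sided'' to ``one-sided inside the dQMA machinery'' in the second bullet of Theorem~\ref{th:general-one-two-sided} costs only an extra $\log(n+r)$ factor on top of the $O(r^2 q)$ bound that the one-sided branch would give, which is exactly what produces the $\log(n+r)$ factor appearing in the corollary.
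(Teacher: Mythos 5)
Your proposal is correct and matches the paper's approach exactly: the paper derives this corollary as a direct application of the two-sided-error branch of Theorem~\ref{th:general-one-two-sided}, instantiated with the known $O(\log n)$-qubit one-way two-sided-error protocol for $\HAM_n^d$ cited in Section~\ref{sec:prelim}. Your bookkeeping remarks about the $d$- and $c$-dependence and the extra $\log(n+r)$ factor are consistent with the paper's statement.
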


The rest of this section is dedicated to proving  Theorem~\ref{th:general-one-two-sided}. 
Let us first give an overview of the proof.
In our $\dQMA$ protocol in the path, the verification algorithm performed by the nodes on the line is merely a simulation of  a two-party one-way quantum communication complexity protocol $\pi$ between Alice and Bob for the function $f(x,y)$, with the help of certificates provided by the prover. Specifically, every intermediate node $v_1,\dots,v_{r-1}$ expects to receive the quantum state sent by Alice to Bob in $\pi$, as certificate. Let us denote by $|h_x\rangle$ this state, which depends on~$x$. The right-end node $v_r$ simulates the two-party protocol $\pi$ using $|h_x\rangle$ received from the left neighbor $v_{r-1}$, and applying Bob's measurement (i.e., the POVM measurement). If $f(x,y)=1$, the prover honestly sends the desired state, and $v_r$ accepts as it does receive $|h_x\rangle$. However, if $f(x,y)=0$, then the malicious prover does not necessarily send a desired state. To catch the potentially malicious behavior of the prover on ``illegal'' instances (i.e., those for which $f(x,y)=0$), each intermediate node checks whether its local proof is ``close to'' the one of its right neighbor. This is performed by an application of the SWAP test. 

Section \ref{sec:protocol} below describes in more detail how to construct the distributed quantum Merlin-Arthur protocol, denoted ${\cal P}_\pi$, from an arbitrary one-way quantum communication protocol~$\pi$ for the function~$f$. 
Section~\ref{sec:analysis} analyzes the completeness and the soundness of the protocol ${\cal P}_\pi$. Finally, Section \ref{sec:proof} shows how to reduce the soundness error using ``parallel repetitions'' and how to apply this analysis 
to prove Theorem~\ref{th:general-one-two-sided}.

\subsection{A $\dQMA$ Protocol for the Path}\label{sec:protocol}

Let $\varepsilon\ge 0$ be a constant, which will be fixed small enough later in the proof. Let $\pi$ be a quantum one-way communication protocol for $f$ transmitting at most $q$~qubits, such that, for every input pair $(x,y)$, if $f(x,y)=1$ then $\pi$ outputs~$1$ with probability at least $1-\varepsilon$, and if $f(x,y)=0$ then $\pi$ outputs~$0$ with probability at least~$2/3$.
Let $|h_x\rangle$ be the $q$-qubit (pure) state sent from Alice to Bob, 
and let $\{M_{y,1},M_{y,0}\}$ be the POVM measurement 
performed by Bob on $|h_x\rangle$, where $M_{y,1}$ corresponds to 
the measurement result~$1$ (accept) and 
$M_{y,0}$ to the measurement result~$0$ (reject). 
Our quantum Merlin-Arthur protocol ${\cal P}_\pi$ is as follows.

\begin{center}
	\fbox{
		\begin{minipage}{13cm}
			{\bf Protocol ${\cal P}_\pi$ for function $f$ on input pair $(x,y)$ in path $v_0,\dots,v_r$:}
			\begin{enumerate}
				\item 
				If $f(x,y)=1$ then the prover sends the quantum register $R_j$ 
				that has the state $|h_x\rangle$ (or $|h_x\rangle\langle h_x|$ as the mixed state representation) as certificate 
				to each of the intermediate nodes $v_j$, $j\in\{1,\ldots,r-1\}$. 
				\item The left-end node $v_0$ prepares the state $\rho_0=|h_x\rangle\langle h_x|$ in quantum register $R_0$.
				\item 
				For every $j=0,\ldots,r-1$, the node $v_j$ chooses a bit $b_j$ uniformly at random, and sends its quantum register $R_j$ to the right neighbor $v_{j+1}$ whenever $b_j=0$. 
				\item
				For every $j=1,\ldots,r-1$, if $v_j$ receives a quantum register from its left neighbor $v_{j-1}$, and if $b_{j}=1$, then $v_j$ performs the SWAP test on the registers 
				$(R_{j-1},R_j)$, and accepts or rejects accordingly; Otherwise, $v_j$ accepts. 
				\item
				If the right-end node $v_r$ receives a quantum register $R_{r-1}$ from its left neighbor, then $v_r$ performs the POVM measurement $\{M_{y,1},M_{y,0}\}$ 
				corresponding to $\pi$ applied to the state in $R_{r-1}$, and accepts or rejects accordingly; Otherwise, $v_r$ accepts.
			\end{enumerate}
		\end{minipage}
	}
\end{center}

In the above protocol ${\cal P}_\pi$, the size of the quantum certificate that each node receives from the prover is at most~$q$,
and the length of the quantum message that each node sends to the neighbor is also at most $q$.  
In the next subsection, we prove that the above protocol  has completeness $1-\varepsilon /2$ and soundness $1-1/42r^2$.

\subsection{Analysis of Protocol ${\cal P}_\pi$}\label{sec:analysis}

For the analysis, we recall the SWAP test. 
The test is a protocol with a given input state on ${\cal H}={\cal H}_1\otimes {\cal H}_2$,  where ${\cal H}_1$ and ${\cal H}_2$ are complex Euclidian spaces. Here, we consider ${\cal H}_1$ and ${\cal H}_2$ as quantum registers $R_1$ and $R_2$.

\begin{center}
	\fbox{
		\begin{minipage}{11cm}
			{\bf SWAP test} on a pure state $|\psi\rangle$ on ${\cal H}$, which is given in registers $(R_1,R_2)$.
			\begin{enumerate}
				\item Prepare the single-qubit state $|+\rangle=\frac{1}{\sqrt{2}}(|0\rangle+|1\rangle)$ in register $R_0$.
				\item If the content of $R_0$ is $1$, then apply the swap operator $S$ on the state $|\psi\rangle$ in registers $(R_1,R_2)$, where $S$ is defined by $S(|j_1\rangle|j_2\rangle)=|j_2\rangle|j_1\rangle$ (namely, $S$ swaps register $R_1$ and register $R_2$). 
				\item Apply the Hadamard operator $H=\frac{1}{\sqrt{2}}\begin{psmallmatrix}1 & 1\\1 & -1\end{psmallmatrix}$ on the state in register $R_0$, and measure the content in the standard basis. Accept if the content is $0$, and reject otherwise. 
			\end{enumerate}
		\end{minipage}
	}
\end{center}

\subparagraph{Completeness.}

The following lemma is a direct consequence of the definition of the SWAP test. 
Here, ${\cal H}_S$ is the symmetric subspace 
in ${\cal H}$ (namely, the subspace spanned by the states invariant 
by the swap operator $S$, or equivalently, the eigenstates of $S$ with eigenvalue $1$),
and ${\cal H}_A$ is the anti-symmetric subspace in ${\cal H}$ 
(namely, the subspace spanned by the eigenstates of $S$ with eigenvalue $-1$). 
Note that any state in ${\cal H}$ is represented as the superposition 
of a state in ${\cal H}_S$ (symmetric state) and a state in ${\cal H}_A$ 
(anti-symmetric state)
as the swap operator $S$ is a Hermitian matrix that only has $+1$ and $-1$ eigenvalues.

\begin{lemma}\label{lem:swaptest}
	Assume that $|\psi\rangle=\alpha|\psi_S\rangle+\beta|\psi_A\rangle$ 
	where $|\psi_S\rangle \in {\cal H}_S$ and $|\psi_A\rangle \in {\cal H}_A$.
	Then, the SWAP test on input $|\psi\rangle$ accepts with probability $|\alpha|^2$. 
\end{lemma}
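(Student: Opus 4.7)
The plan is a direct trace through the three steps of the SWAP test, exploiting the fact that $|\psi_S\rangle$ and $|\psi_A\rangle$ are (orthogonal) eigenvectors of the swap operator $S$ with eigenvalues $+1$ and $-1$ respectively. Since the decomposition is into orthogonal unit vectors, normalization of $|\psi\rangle$ forces $|\alpha|^2+|\beta|^2=1$, and in particular $|\alpha|^2$ is the quantity we need to isolate at the end.

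First, I would write the initial state of the three registers $(R_0,R_1,R_2)$ after Step~1 as
\[
\tfrac{1}{\sqrt 2}\bigl(|0\rangle+|1\rangle\bigr)\otimes|\psi\rangle.
\]
After the controlled-swap in Step~2, this becomes
\[
\tfrac{1}{\sqrt 2}\bigl(|0\rangle|\psi\rangle + |1\rangle S|\psi\rangle\bigr).
\]
Using $S|\psi_S\rangle=|\psi_S\rangle$ and $S|\psi_A\rangle=-|\psi_A\rangle$, we have $S|\psi\rangle=\alpha|\psi_S\rangle-\beta|\psi_A\rangle$, so the state can be rewritten by regrouping the $|\psi_S\rangle$ and $|\psi_A\rangle$ components:
\[
\tfrac{\alpha}{\sqrt 2}\bigl(|0\rangle+|1\rangle\bigr)|\psi_S\rangle + \tfrac{\beta}{\sqrt 2}\bigl(|0\rangle-|1\rangle\bigr)|\psi_A\rangle.
\]

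Next, I would apply the Hadamard gate $H$ on $R_0$ in Step~3. Since $H$ sends $\tfrac{1}{\sqrt 2}(|0\rangle+|1\rangle)\mapsto|0\rangle$ and $\tfrac{1}{\sqrt 2}(|0\rangle-|1\rangle)\mapsto|1\rangle$, the state becomes
\[
\alpha\,|0\rangle|\psi_S\rangle + \beta\,|1\rangle|\psi_A\rangle.
\]
Measuring $R_0$ in the standard basis then yields outcome $0$ (accept) with probability $|\alpha|^2\,\||\psi_S\rangle\|^2=|\alpha|^2$, using that $|\psi_S\rangle$ is a unit vector.

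There is no real obstacle here: the argument is a one-shot calculation, and the only subtlety worth flagging is the implicit normalization convention (that $|\psi_S\rangle$ and $|\psi_A\rangle$ are unit vectors and orthogonal since they lie in orthogonal eigenspaces of the Hermitian operator $S$). The statement as written should be understood under this convention, and this is exactly what the subsequent applications of the lemma will rely on.
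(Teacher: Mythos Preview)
Your proof is correct and follows essentially the same approach as the paper: both trace through the three steps of the SWAP test, use $S|\psi_S\rangle=|\psi_S\rangle$ and $S|\psi_A\rangle=-|\psi_A\rangle$, and arrive at the final state $\alpha|0\rangle|\psi_S\rangle+\beta|1\rangle|\psi_A\rangle$ before measurement. The only cosmetic difference is that you regroup by the $|\psi_S\rangle,|\psi_A\rangle$ factors before applying $H$, whereas the paper regroups by the $|0\rangle,|1\rangle$ factors first; the computations are otherwise identical.
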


\begin{proof}
Noting that $S|\psi_S\rangle=|\psi_S\rangle$ and $S|\psi_A\rangle=-|\psi_A\rangle$, the state after Step 2 is 
\begin{align*}
	\frac{1}{\sqrt{2}}(|0\rangle|\psi\rangle+|1\rangle S|\psi\rangle)
	&= \frac{1}{\sqrt{2}}(\alpha|0\rangle|\psi_S\rangle+\beta|0\rangle|\psi_A\rangle+\alpha|0\rangle|\psi_S\rangle-\beta|1\rangle|\psi_A\rangle)\\
	&= \frac{1}{\sqrt{2}}[|0\rangle(\alpha|\psi_S\rangle+\beta|\psi_A\rangle)+|1\rangle(\alpha|\psi_S\rangle-\beta|\psi_A\rangle)].
\end{align*}
The final state obtained in Step 3 is 
\begin{align*}
	\lefteqn{\frac{1}{\sqrt{2}}[(H|0\rangle)(\alpha|\psi_S\rangle+\beta|\psi_A\rangle)+(H|1\rangle)(\alpha|\psi_S\rangle-\beta|\psi_A\rangle)]} \\
	&= \frac{1}{2}[(|0\rangle+|1\rangle)(\alpha|\psi_S\rangle+\beta|\psi_A\rangle)+(|0\rangle-|1\rangle)(\alpha|\psi_S\rangle-\beta|\psi_A\rangle)]\\
	&= \alpha|0\rangle|\psi_S\rangle+\beta|1\rangle|\psi_A\rangle.
\end{align*} 
Thus, the probability that $0$ is measured on $R_0$ (and thus is accepted) is $|\alpha|^2$.
\end{proof}

For the completeness, assume $f(x,y)=1$. The prover then sends $|h_x\rangle$ to all the intermediate nodes.
Then, all the nodes except the right-end node have $|h_x\rangle$. By Lemma~\ref{lem:swaptest}, all the SWAP tests 
done in Step 4 are accepted with probability $1$ 
(note that $|h_x\rangle\otimes|h_x\rangle$ is a symmetric state). 
Furthermore, the right-end node accepts with probability 
at least $(1-\varepsilon)/2+1/2=1-\varepsilon/2$ 
as $v_r$ can receive $|h_x\rangle$ and simulate $\pi$ with probability $1/2$ 
and accepts otherwise in Step 5.

\subparagraph{Soundness.}

The following lemma presents a crucial property of the SWAP test: its applicability
for checking whether the two \emph{reduced} states are close. 
It is a trace-distance version of a lemma by Rosgen~\cite[Lemma~5.1]{Rosgen08}.
Here, a reduced state intuitively represents the local information on its own quantum system, by disregarding the outside systems. 
Note that the trace distance between two quantum states 
$\rho$ and $\sigma$ is characterized as
$
\mathsf{dist}(\rho,\sigma)=\max_{M} \tr(M(\rho-\sigma)),
$
where the maximization is taken over all positive semi-definite matrices $M$ 
such that $M\leq I$. 

\begin{lemma}\label{lem:swaptest_close}
	Let $\K\geq 1$, and assume that the SWAP test on input $\rho$ in the input register $(R_1,R_2)$ 
	accepts with probability $1-1/\K$.
	Then, $\mathsf{dist}(\rho_1, \rho_2 ) \leq 2/\sqrt{\K}+ 1/\K$, 
	where $\rho_j$ is the reduced state on $R_j$ of $\rho$.
	Moreover, if the SWAP test on input $\rho$ accepts with probability~1, 
	then $\rho_1=\rho_2$ (and hence $\mathsf{dist}(\rho_1, \rho_2)=0$).
\end{lemma}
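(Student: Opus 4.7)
The approach will be to reduce the claim to a bound on $\mathsf{dist}(\rho, S\rho S)$, where $S$ is the swap operator on $R_1 \otimes R_2$. The key observation to exploit is that the partial trace of $S\rho S$ over $R_2$ equals $\rho_2$: for any operator $A$ on $R_1$, one has $\tr((A\otimes I)S\rho S) = \tr(S(A\otimes I)S\cdot\rho) = \tr((I\otimes A)\rho) = \tr(A\rho_2)$. This immediately gives $\rho_1 - \rho_2 = \tr_{R_2}(\rho - S\rho S)$, and since trace distance contracts under partial trace, the lemma will follow from the inequality $\mathsf{dist}(\rho, S\rho S) \le 2/\sqrt{\K}$.

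To establish this inequality, I would decompose $\rho$ with respect to the orthogonal projectors $P = (I+S)/2$ and $Q = (I-S)/2$ onto the symmetric and antisymmetric subspaces. Extending Lemma~\ref{lem:swaptest} to mixed states by linearity, the SWAP test on $\rho$ accepts with probability $\tr(P\rho)$, so the hypothesis yields $\tr(P\rho) = 1 - 1/\K$ and $\tr(Q\rho) = 1/\K$. Using $SP = P$ and $SQ = -Q$, a direct expansion of $\rho = (P+Q)\rho(P+Q)$ gives $\rho - S\rho S = 2(P\rho Q + Q\rho P)$. The main quantitative step is H\"older's inequality for Schatten norms applied to the factorization $P\rho Q = (P\sqrt{\rho})(\sqrt{\rho}Q)$: it produces $\|P\rho Q\|_1 \le \|P\sqrt{\rho}\|_2\,\|\sqrt{\rho}Q\|_2 = \sqrt{\tr(P\rho)\tr(Q\rho)} \le 1/\sqrt{\K}$. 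Combined with $\|Q\rho P\|_1 = \|P\rho Q\|_1$ (by taking adjoints) and the triangle inequality, this yields $\frac{1}{2}\|\rho - S\rho S\|_1 \le 2/\sqrt{\K}$, and together with the reduction of the first paragraph the desired estimate $\mathsf{dist}(\rho_1,\rho_2) \le 2/\sqrt{\K} \le 2/\sqrt{\K} + 1/\K$.

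For the ``moreover'' clause, if the acceptance probability equals $1$ then $\tr(Q\rho) = 0$, which forces $Q\rho Q = 0$ (a positive semidefinite operator of zero trace must vanish). Hence $\|Q\sqrt{\rho}\|_2 = 0$, so $Q\rho = \rho Q = 0$, meaning $\rho = P\rho P$ is supported on the symmetric subspace and satisfies $S\rho S = \rho$; the conclusion $\rho_1 = \rho_2$ then follows by tracing out $R_2$. The main obstacle I anticipate is the initial identification $\tr_{R_2}(S\rho S) = \rho_2$, which is a geometric fact about the swap rather than a computation on the input; once it is in hand, the analysis becomes a routine exercise in the symmetric/antisymmetric decomposition and Schatten-norm inequalities, with no appeal to fidelity, purification, or the Fuchs--van de Graaf inequalities required to adapt Rosgen's fidelity-based argument.
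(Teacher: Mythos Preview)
Your argument is correct and in fact yields the sharper bound $\mathsf{dist}(\rho_1,\rho_2)\le 2/\sqrt{\K}$, without the additive $1/\K$. The paper proceeds quite differently: it fixes a convex decomposition $\rho=\sum_j p_j\ket{\psi_j}\bra{\psi_j}$, splits each $\ket{\psi_j}$ into symmetric and antisymmetric parts, and then bounds the cross terms $\tr_i(\ket{\psi_j^S}\bra{\psi_j^A})$ via the fidelity identity of Lemma~\ref{lem:tracenorm}, followed by Cauchy--Schwarz over the mixture weights; the leftover diagonal antisymmetric block contributes the extra $1/\K$. Your route sidesteps all of this by working directly with the projectors $P,Q$ and the operator identity $\tr_{R_2}(S\rho S)=\rho_2$, reducing the problem to bounding $\|P\rho Q\|_1$ via the Schatten H\"older inequality. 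The gain is twofold: you avoid any appeal to fidelity or to a particular ensemble decomposition of $\rho$, and the diagonal $Q\rho Q$ term simply cancels in $\rho-S\rho S$, which is why your constant is better. The paper's approach, by contrast, stays closer to Rosgen's original fidelity-based argument and makes the role of the antisymmetric component more explicit term by term.
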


\begin{proof}

First, we observe the second statement. By Lemma~\ref{lem:swaptest}, if $\rho$ includes some asymmetric state, the SWAP test rejects with a non-zero probability.
Hence, $\rho$ must consist of only symmetric states, which means that the two reduced states of $\rho$ coincides.  

In the remaining part, we prove the first statement. 
The mixed state $\rho$ can be represented as $\rho=\sum_j p_j |\psi_j\rangle\langle \psi_j|$ 
(which means that the state is in a (pure) state $|\psi_j\rangle$ with probability $p_j$).
Moreover, each $|\psi_j\rangle$ can be represented as a superposition of a symmetric state and an antisymetric state,
namely, $|\psi_j\rangle=\alpha_j|\psi_{j}^S\rangle+\beta_j|\psi_{j}^A\rangle$ with some symmetric state $|\psi_{j}^S\rangle$
and some antisymmetric state $|\psi_{j}^A\rangle$, where $|\alpha_j|^2+|\beta_j|^2=1$. 
Then, by Lemma~\ref{lem:swaptest} and the assumption, $\sum_j p_j |\alpha_j|^2\geq 1-1/\K$, and thus,
\begin{equation}\label{eq:beta}
	\sum_j p_j |\beta_j|^2\leq \frac{1}{\K}.
\end{equation}
The mixed state $\rho$ is furthermore rewritten as 
\begin{equation}\label{eq:rho}
	\rho=\sum_j p_j 
	\left( |\alpha_j|^2 |\psi_j^S\rangle\langle\psi_j^S|+\alpha_j\beta_j^* |\psi_j^S\rangle\langle \psi_j^A|
	+\alpha_j^*\beta |\psi_j^A\rangle\langle \psi_j^S|+|\beta|^2 |\psi_j^A\rangle\langle\psi_j^A|
	\right),
\end{equation}
and the reduced state $\rho_{3-i}=\tr_{i}(\rho)$ ($i=1,2$) 
on ${\cal H}_{3-i}$ (obtained by tracing out on ${\cal H}_i$) is
\begin{align*}
	\tr_{i}(\rho)=\sum_j p_j &
	\left( |\alpha_j|^2 \tr_i (|\psi_j^S\rangle\langle\psi_j^S|)
	+\alpha_j\beta_j^* \tr_i(|\psi_j^S\rangle\langle \psi_j^A|) \right.\\
	& \left. +\alpha_j^*\beta \tr_i(|\psi_j^A\rangle\langle \psi_j^S|)
	+|\beta_j|^2 \tr_i(|\psi_j^A\rangle\langle\psi_j^A|)
	\right).
\end{align*}
As $\tr_1(|\psi_j^S\rangle\langle\psi_j^S|)=\tr_2(|\psi_j^S\rangle\langle\psi_j^S|)$ from the definition of the symmetric state, 
\begin{align*}
	&\!\!\!\!\!\!\!\!  \tr_1(\rho)-\tr_2(\rho)\\
	&\!\!\!\! =\sum_j p_j 
	\left(
	\alpha_j\beta_j^* [\tr_1 (\rho_j^{sa})-\tr_2 (\rho_j^{sa})]
	+\alpha_j^*\beta_j [\tr_1 (\rho_j^{as})-\tr_2 (\rho_j^{as})]
	+|\beta_j|^2 [\tr_1 (\rho_j^{a}) - \tr_2(\rho_j^{a})]
	\right),
\end{align*}
where $\rho_j^{sa}=|\psi_j^S\rangle\langle\psi_j^A|$, $\rho_j^{as}=|\psi_j^A\rangle\langle\psi_j^S|$, and $\rho_j^{a}=|\psi_j^A\rangle\langle\psi_j^A|$. 
By the positive scalability and the triangle inequality of the trace norm, 
$\| \tr_1(\rho)-\tr_2(\rho) \|_{\tr}$ is at most
\begin{align*}
	\sum_j p_j
	&\left(
	|\alpha_j||\beta_j| \|\tr_1(\rho_j^{sa})-\tr_2(\rho_j^{sa})\|_{\tr}
	+ |\alpha_j||\beta_j| \|\tr_1(\rho_j^{as})-\tr_2(\rho_j^{as})\|_{\tr}\right.\\
	&   \left.  + |\beta_j|^2 \|\tr_1(\rho_j^{a})-\tr_2(\rho_j^{a})\|_{\tr}
	\right).
\end{align*}
As $\rho_j^a$ is a quantum state (with trace norm $1$), we have 
\[
\|\tr_1(\rho_j^{a})-\tr_2(\rho_j^{a})\|_{\tr} 
\leq \|\tr_1(\rho_j^{a})\|_{\tr}+\|\tr_2(\rho_j^{a})\|_{\tr}=1+1\leq 2.
\]
On the contrary, we notice that $\rho_j^{sa}$ (or $\rho_j^{as}$) is not a quantum state. 
However, by Lemma~\ref{lem:tracenorm} and the fact that the fidelity between two quantum states is at most $1$,  
\[
\| \tr_i (\rho_j^{sa}) \|_{\tr} 
= \| \tr_i (|\psi_j^S\rangle\langle\psi_j^A|) \|_{\tr} 
= F(\tr_{3-i}(|\psi_j^A\rangle\langle\psi_j^A|), \tr_{3-i}(|\psi_j^S\rangle\langle\psi_j^S|))\leq 1,
\]
and thus,
\[
\|\tr_1(\rho_j^{sa})-\tr_2(\rho_j^{sa})\|_{\tr}\leq 2.
\]
Similarly, we have $\|\tr_1(\rho_j^{as})-\tr_2(\rho_j^{as})\|_{\tr}\leq 2$. Therefore,
\begin{equation}\label{eq:last}
	\| \tr_1(\rho)-\tr_2(\rho) \|_{\tr}\leq \sum_j p_j 4|\alpha_j||\beta_j|+\sum_j p_j 2|\beta_j|^2.
\end{equation}
By Eq.~(\ref{eq:beta}), the second term of the right-hand side is at most $2/\K$. 
By the Cauchy-Schwarz inequality (and $|\alpha_j|\leq 1$), 
the first term is at most
\[
4 \sum_j \sqrt{p_j} \sqrt{p_j} |\beta_j|\leq 4\left(\sum_j p_j \right)^{1/2}\left(\sum_j p_j |\beta_j|^2 \right)^{1/2}\leq 4\sqrt{\frac{1}{\K}}. 
\]
This induces that 
\[
\mathsf{dist}(\rho_1,\rho_2) =\frac{1}{2}\| \tr_2(\rho)-\tr_1(\rho) \|_{\tr}\leq \frac{1}{\K}+2\sqrt{\frac{1}{\K}}.
\qedhere\]
\end{proof}

For the soundness, let $(x,y)$ be any pair such that $f(x,y)=0$.

\begin{lemma}\label{lem:soundness}
	For every $j\in\{1,\ldots,r\}$, let $F_{j}$ be the event that $v_j$ performs the local test (SWAP or POVM) in Protocol~${\cal P}_\pi$, and let $E_{j}$ be the event that this local test rejects. 
	Then we have 
	$
	\sum_{j=1}^{r}\Pr[E_{j}|F_{j}]\geq \frac{1}{21r}.
	$
\end{lemma}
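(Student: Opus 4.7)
The plan is to argue by contradiction. I would assume $S := \sum_{j=1}^{r} \Pr[E_j \mid F_j] < \tfrac{1}{21r}$ and show that this forces the final POVM at $v_r$ to reject with probability exceeding $\tfrac{1}{21r}$, which is impossible since $\Pr[E_r \mid F_r] \leq S$.

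\textbf{Setup.} First I would fix notation: let $\sigma$ be the global quantum state on registers $R_0, R_1, \ldots, R_{r-1}$ immediately after $v_0$ prepares $|h_x\rangle\langle h_x|$ in $R_0$ and the (possibly entangled) prover's certificate is placed in $R_1, \ldots, R_{r-1}$, and let $\rho_j$ denote the reduced state of $\sigma$ on $R_j$; in particular $\rho_0 = |h_x\rangle\langle h_x|$. The classical bits $b_j$ only decide which node physically carries out each local test, not the underlying quantum content. Moreover, for $v_j$ to do its SWAP one needs $b_j=1$ while for $v_{j+1}$ to do its SWAP one needs $b_j=0$, and the POVM at $v_r$ requires $b_{r-1}=0$ whereas a SWAP at $v_{r-1}$ requires $b_{r-1}=1$. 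Hence all local tests act on pairwise disjoint subsets of registers, and conditional on $F_j$ the $j$-th test is simply a measurement applied to the reduced state of $\sigma$ on the corresponding registers, independently of the other tests.

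\textbf{Chaining.} Writing $\varepsilon_j := \Pr[E_j \mid F_j]$, I would apply Lemma~\ref{lem:swaptest_close} with $K = 1/\varepsilon_j$ at each intermediate node to get $\mathsf{dist}(\rho_{j-1}, \rho_j) \leq 2\sqrt{\varepsilon_j} + \varepsilon_j$ for every $j \in \{1, \ldots, r-1\}$. Under the contradiction hypothesis $S < \tfrac{1}{21r}$, the triangle inequality for the trace distance combined with the Cauchy--Schwarz inequality gives
\[
\mathsf{dist}(\rho_0, \rho_{r-1}) \leq \sum_{j=1}^{r-1}\bigl(2\sqrt{\varepsilon_j} + \varepsilon_j\bigr) \leq 2\sqrt{(r-1)\,S} + S < \tfrac{2}{\sqrt{21}} + \tfrac{1}{21}.
\]
Invoking the operational characterization $\mathsf{dist}(\rho,\rho') = \max_M \tr(M(\rho-\rho'))$ with $M = M_{y,1}$, together with the soundness of $\pi$ on the 0-input $(x,y)$ (which gives $\tr(M_{y,1}\rho_0) \leq \tfrac{1}{3}$), I would conclude
\[
\tr(M_{y,1}\rho_{r-1}) \leq \tr(M_{y,1}\rho_0) + \mathsf{dist}(\rho_0,\rho_{r-1}) < \tfrac{1}{3} + \tfrac{2}{\sqrt{21}} + \tfrac{1}{21},
\]
and therefore $\varepsilon_r = 1 - \tr(M_{y,1}\rho_{r-1}) > \tfrac{2}{3} - \tfrac{2}{\sqrt{21}} - \tfrac{1}{21} > \tfrac{1}{21} \geq \tfrac{1}{21r}$, contradicting $\varepsilon_r \leq S$.

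\textbf{Main obstacle.} The crux of the argument is that Merlin is free to send an \emph{arbitrarily entangled} certificate across the intermediate nodes, so the individual reduced states $\rho_j$ need not be pure, nor anywhere close to $|h_x\rangle$; a naive product-state analysis therefore does not suffice. Lemma~\ref{lem:swaptest_close} is exactly what converts per-edge SWAP-test success into trace-distance closeness of reduced states, which can then be chained along the path. The Cauchy--Schwarz step is essential to absorb the square-root dependence so that the total budget scales as $1/r$ rather than $1/r^2$, matching the bound claimed in the lemma.
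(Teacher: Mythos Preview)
Your proposal is correct and follows essentially the same route as the paper: bound each $\mathsf{dist}(\rho_{j-1},\rho_j)$ via Lemma~\ref{lem:swaptest_close}, chain by the triangle inequality, apply Cauchy--Schwarz, and compare against the soundness of $\pi$ at the endpoint. The only difference is packaging: the paper argues directly (it bounds $\alpha_r$ from below, absorbs it into $3\sum_j\sqrt{\alpha_j}\ge 2/3$, and then applies Cauchy--Schwarz to obtain $\sum_j\alpha_j\ge (2/9\sqrt{r})^2\ge 1/(21r)$), whereas you phrase the same computation as a contradiction from $S<1/(21r)$.
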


\begin{proof}
Let $\alpha_j=\Pr[E_j|F_j]$. Then, for every $j\in\{1,\ldots,r\}$, $\Pr[\overline{E_{j} }|F_{j}]=1-\alpha_j$, 
where we note that the complementary event $\overline{E_{j}}$ for $j=1,\ldots,r-1$ 
is the event where the SWAP test on the two $q$-qubit states $\rho_{j-1}$ and $\rho_j$ accepts, and $\overline{E_r}$ represents the event that 
the result of the POVM measurement is $1$ (accept),  
which corresponds to the POVM element $M_{y,1}$. 
By Lemma \ref{lem:swaptest_close}, 
the trace distance $\mathsf{dist}$ between the reduced $q$-qubit states $\rho_{j-1}$ on $v_{j-1}$ and $\rho_{j}$ on $v_j$ is
\[
\mathsf{dist}(\rho_{j-1},\rho_{j})\leq 
\left\{
\begin{array}{ll}
	\frac{2}{\sqrt{1/\alpha_j}} + \frac{1}{1/\alpha_j} & \mbox{if $\alpha_j\neq 0$}\\
	0 & \mbox{otherwise}
\end{array}
\right. 
\] 
Thus $\mathsf{dist}(\rho_{j-1},\rho_{j})\leq 3\sqrt{\alpha_j}. $
Then, by the triangle inequality, 
\[
\mathsf{dist}(\rho_0,\rho_{r-1})\leq \sum_{j=1}^{r-1} \mathsf{dist}(\rho_{j-1},\rho_{j})
\leq 3 \sum_{j=1}^{r-1} \sqrt{\alpha_j}.
\]
For $\Pr[E_r|F_r]$, the soundness of $\pi$ promises that the probability that 
the test $\{M_{y,1},M_{y,0}\}$ rejects $\rho_0=|h_x\rangle\langle h_x|$ is at least $2/3$, 
i.e., $\tr(M_{y,0}\rho_0 ) \geq 2/3$. Hence, 
\[
\alpha_r=\Pr[E_r|F_r]=\tr(M_{y,0}\rho_{r-1} )\geq \frac{2}{3}-\mathsf{dist}(\rho_0,\rho_{r-1})\geq \frac{2}{3}-3 \sum_{j=1}^{r-1} \sqrt{\alpha_j},
\]
where the first inequality comes from the characterization of $\mathsf{dist}$ on 
indistinguishability of two states, that is,  $|\tr(M_{y,0}\rho_0)-\tr(M_{y,0}\rho_{r-1})|\leq \mathsf{dist}(\rho_0,\rho_{r-1})$. 
Thus, we have
\[
3\sum_{j=1}^{r} \sqrt{\alpha_j} \geq \alpha_r + 3 \sum_{j=1}^{r-1} \sqrt{\alpha_j} \geq \frac{2}{3}.
\]
By the Cauchy-Schwarz inequality,
\[
\sqrt{r}\sqrt{\sum_{j=1}^{r}\alpha_j}\geq \sum_{j=1}^{r} \sqrt{\alpha_j},
\]
and thus we have 
\[
\sum_{j=1}^{r} \alpha_j \geq \left(\frac{2}{9\sqrt{r}}\right)^2 \geq \frac{1}{21r}.
\qedhere\]	
\end{proof}

In Steps~4 and~5, node $v_j$, $1\leq j \leq r$, performs  
the local test (SWAP or POVM) with probability at least $1/4$ 
(more precisely, $v_j$, $1\leq j\leq r-1$, performs the SWAP test with probability $1/4$ and $v_r$ performs the POVM with probability $1/2$). 
It follows that, for every  $j\in\{1,\ldots,r\}$, 
the event $F_{j}$ occurs in at least $(1/4)\times 2^{r}$ outcomes  
of all the $2^{r}$ possible outcomes $b_0\cdots b_{r-1}$ 
that can be obtained in Step 3. 
Here, we consider any fixed outcomes $b_0\cdots b_{r-1}$ that induce $k$ events $F_{j_1},F_{j_2},\ldots,F_{j_k}$ with $k\neq 0$ where 
we note that $0\leq k \leq \lfloor r/2\rfloor$ in general.  
The probability that some node rejects in Steps~4 or~5 {\em under this outcome} is 
\[
\Pr[\vee_{i=1}^k  E_{j_i}|\wedge_{i=1}^{k} F_{j_i}]
\geq \frac{1}{\lfloor r/2\rfloor} \sum_{i=1}^k \Pr[E_{j_i}|\wedge_{i=1}^{k} F_{j_i}]
=    \frac{1}{\lfloor r/2\rfloor} \sum_{i=1}^k \Pr[E_{j_i}|F_{j_i}], 
\]
where the inequality comes from Lemma~\ref{lem:probability} 
in Appendix~\ref{sec:ProbFund}, 
and the equality comes from the fact that each of $F_{j_i}$ and $E_{j_i}$ 
is independent from all the other event $F_{j_{i'}}$ with $i'\neq i$ 
(note that $|j_{i'}-j_i|\geq 2$ since $F_{j-1}$ and $F_{j}$ never occur at the same time).
As each outcome occurs with probability $1/2^{r}$, the probability that some node rejects in Steps~4 or~5 is at least
\[
\frac{1}{2^{r}}\cdot \left[ (1/4)\cdot 2^{r}\right] \cdot \frac{1}{\lfloor r/2\rfloor} 
\sum_{j=1}^r \Pr[E_j|F_j] 
\; \geq \; \frac{1}{2r}\sum_{j=1}^{r}\Pr[E_j|F_j]
\; \geq \; \frac{1}{2r}\cdot \frac{1}{21r}= \frac{1}{42r^2},
\]
where the second last inequality comes from Lemma~\ref{lem:soundness}.

\subsection{Proof of Theorem~\ref{th:general-one-two-sided}}\label{sec:proof}

So far, we have shown that the protocol ${\cal P}_\pi$ has a completeness parameter 
very close to $1$, but high soundness error.
To establish Theorem~\ref{th:general-one-two-sided}, we need to reduce the soundness error without degrading the completeness too much. This  is achieved 
via a form of parallel repetition of ${\cal P}_\pi$, by taking the logical conjunction of the outputs 
obtained by repetitions. The  protocol resulting from $k$ repetitions of ${\cal P}_\pi$ is denoted by ${\cal P}_\pi[k]$, and works as follows.

\begin{center}
	\fbox{
		\begin{minipage}{13cm}
			{\bf Protocol ${\cal P}_\pi[k]$: Soundness reduction of Protocol ${\cal P}_\pi$}
			\begin{enumerate}
				\item If $f(x,y)=1$ then the prover sends the $k$ quantum registers $R_{j,i}$ ($i=1,\ldots,k$), each of which has a state $|h_x\rangle$ as certificate, to each of the intermediate nodes $v_j$, $j\in\{1,\ldots,r-1\}$. 
				\item The left-end node $v_0$ prepares the $k$ quantum registers $R_{0,i}$, each of which has $|h_x\rangle$.  
				\item For every $j=0,\dots,r-1$, the node $v_j$ chooses a $k$-bit string $b_{j,1}\cdots b_{j,k}$ uniformly at random, 
				and sends $R_{j,i}$, together with the index~$i$, to its right neighbor $v_{j+1}$ whenever $b_{j,i}=0$. 
				\item For every $j=1,\ldots,r-1$ and for every $i=1,\dots,k$, if the node $v_j$ receives an index~$i$, 
				and if $b_{j,i}=1$, then $v_j$ performs the SWAP test on $(R_{j-1,i},R_{j,i})$. 
				Node $v_j$ rejects whenever at least one of the performed SWAP tests rejects, 
				and it accepts otherwise.  
				\item 
				If the right-end node $v_r$ receives an index~$i\in\{1,\dots,k\}$, 
				then it performs the POVM measurement $\{M_{y,1},M_{y,0}\}$ corresponding to $\pi$ applied to the state in $R_{r-1,i}$. 
				Node~$v_r$ rejects if at least one of the performed POVM measurements rejects, and it accepts otherwise.
			\end{enumerate}
		\end{minipage}
	}
\end{center}

Protocol ${\cal P}_\pi[k]$ has completeness $(1-\varepsilon/2)^k$, that is, the completeness has not reduced much whenever $\varepsilon$ is small. 
By a similar analysis of standard error reduction techniques for quantum Merlin-Arthur games as the analysis in~\cite{AN02arXiv,KSV02book}, one can show that ${\cal P}_\pi[k]$ 
has soundness $(1-1/42r^2)^k$. By choosing $k=84 r^2$, 
the resulting protocol ${\cal P}_\pi[k]$
has completeness $1-42 r^2\varepsilon$ and soundness error $(1/e)^2<1/3$, 
while the size of the certificates is $O(r^2 q)$ qubits, 
and the length of the message exchanged between neighbors is $O(r^2 (q+\log r))$ 
(where the additional term $\log r$ comes from the index to the right neighbor in Step~3 
of ${\cal P}_\pi[k]$).

Theorem~\ref{th:general-one-two-sided} can now be easily derived from the above analysis.
For the first part of the theorem, where $f$ is having a one-sided-error one-way protocol $\pi$, simply use the protocol ${\cal P}_\pi$ from Section~\ref{sec:protocol} with $\varepsilon=0$. The result then follows from the analysis of Section~\ref{sec:analysis} and from the above discussion about soundness reduction.

For the case of second part of the theorem, where $f$ is having a two-sided-error one-way protocol, 
we repeat the protocol $\pi$
for $O(\log (n+r) )$ times 
and using \emph{majority voting} 
to get a protocol that correctly computes the value of the function with probability at least $1-1/42 n^{c}r^2$.
The protocol $\pi$ of Section \ref{sec:protocol} can thus be chosen with 
$\varepsilon=1/42 n^{c}r^2$, with message size $O(q\log (n+r))$. 
The result then follows similarly. 
\qed

\section{Certifying Equality in General Graphs}\label{sec:general-graphs}

We now extend our protocol for checking equality between $n$-bit strings $x_1,\dots,x_t$ stored at $t\geq 2$~distinct nodes $u_1,\dots,u_t$ of a connected simple graph~$G$. We first show how to reduce the problem to trees of a specific structure, and then present a protocol for trees. 

\subsection{Reduction to Trees}
Let $G=(V,E)$ be a connected simple graph, and let $u_1,\dots,u_t$ be $t\geq 2$~distinct nodes of~$G$. Assume, without loss of generality, that $u_1$ is the most central node among them, i.e., it satisfies 
$
\max_{i=1,\dots,t}\mathsf{dist}_G(u_1,u_i) = \min_{j=1,\dots,t}\max_{i=1,\dots,t}\mathsf{dist}_G(u_j,u_i). 
$
Let $r=\max_{i=1,\dots,t}\mathsf{dist}_G(u_1,u_i)$ be the \emph{radius} of the $t$ terminals $u_1,\dots,u_t$. 
We construct a tree $T$ rooted at $u_1$, that has all terminals as leaves, maximum degree $t$ and depth at most $r+1$ (see Figure~\ref{fig: tree ronstruction}).
To this end, start with a BFS tree $T'$ in $G$, rooted at $u_1$. Truncate the tree at each terminal $u_i$ that does not have any terminal as successors, thus limiting the depth to $r$ and the degree to~$t$. 
For every terminal $u_i$ that is not a leaf, including $u_1$, replace $u_i$ with a node $u'_i$, and connect $u_i$ to $u'_i$ as a leaf, where the input $x_i$ stays at $u_i$ --- this guarantees that all inputs are now on leaves, the same degree bound holds, and the depth is increased by at most~1.

\begin{figure}[]
	\begin{center}
		\includegraphics[scale=.75]{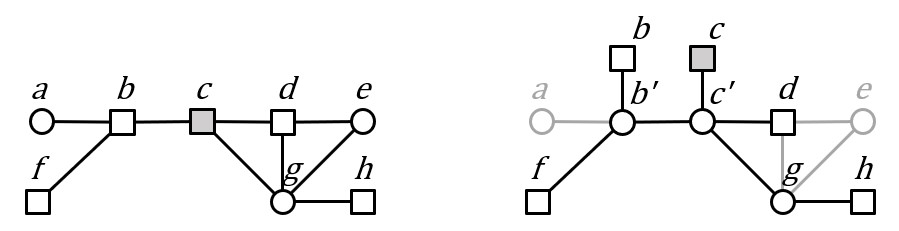}
		\caption{
			The construction of a spanning tree: a graph $G$ (left) and its corresponding spanning tree $T$ (right). 
			Terminals are marked by squares, and $c$ is the root.
			Node $b$ (resp.~$c$), which was a terminal, is replaced by a 
			non-terminal node $b'$ (resp.~$c'$), while the other terminals are leaves in the tree so they remain unmodified.}
		\label{fig: tree ronstruction}
	\end{center}
\end{figure}

While $T$ is not a sub-tree of $G$, we can easily emulate an algorithm or a labeling scheme designed for $T$, in $G$ (specifically, in $T'$). To this end, every internal terminal $u_i$ in $T'$ simulates the behavior of $u_i$ itself, and also of $u'_i$. The following lemma is using classical assumptions of network computing (see, e.g., \cite{Peleg00}) and can be proved using standard techniques (see, e.g., \cite{KormanKP10}). We refer to the tree~$T$ in the construction described above.

\begin{lemma}\label{lem:treeT2}
	For any graph $G=(V,E)$ with nodes IDs taken in a range polynomial in~$|V|$, there is a deterministic distributed Merlin-Arthur protocol for the tree $T$ using certificates on $O(\log |V|)$ bits. 
\end{lemma}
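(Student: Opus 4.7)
The plan is to adapt the classical proof-labeling scheme for BFS trees, as in~\cite{KormanKP10}. Each node $v$ of $G$ receives a certificate containing: (i) the identity of the claimed root $u_1$; (ii) its depth $d(v)$ in the BFS tree $T'$ from $u_1$; (iii) the identity of its parent in $T'$; and (iv) a few boolean flags indicating whether $v$ is one of the terminals $u_1,\dots,u_t$, whether it is an ``internal terminal'' which must also simulate an auxiliary node $u'_i$, and whether it is the root. Each terminal also stores the identities of the (at most $t-1$) terminals lying in the subtree of $T'$ rooted at itself, so that the truncation rule can be certified locally. Since all IDs are polynomial in $|V|$ and $t \le |V|$, all of this information fits in $O(\log |V|)$ bits.

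The one-round local verification at every node $v$ checks the following. First, $v$ exchanges its certificate with its neighbors in $G$ and verifies that all of them carry the same root identity. Second, $v$ checks that $d(v)=0$ iff its own ID matches the claimed root. Third, $v$ checks that the parent it is pointing to is indeed a neighbor in $G$ with depth exactly $d(v)-1$, and that none of its neighbors has depth smaller than $d(v)-1$ (this enforces the BFS property). Fourth, if $v$ is flagged as a terminal which is not a leaf of the BFS tree (i.e.\ an internal terminal $u_i$), then $v$ also plays the role of $u'_i$: it locally simulates the edge to $u'_i$ and verifies that its stored list of terminal descendants is consistent with the list received from each of its children. Finally, each terminal flagged as having no terminal descendant verifies that it is truncated, i.e., that none of its neighbors at depth $d(v)+1$ claims $v$ as a parent.

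For completeness, when $T'$ is indeed a BFS tree rooted at $u_1$ and the truncation and $u'_i$ replacements are performed correctly, the prover assigns the certificates described above and all local checks pass. For soundness, a standard argument~\cite{KormanKP10} shows that if the claimed root is not universally agreed upon, or if the depths are inconsistent with parent pointers, or if the BFS or truncation property is violated, then at least one node detects a discrepancy. In particular, an inconsistent root ID must change along some edge and be caught; an incorrect depth must either mismatch a parent pointer or create a neighbor of depth less than $d(v)-1$; and a violation of the truncation rule would be detected either by the internal terminal not seeing a matching descendant list at a child, or by an alleged leaf that still has a child claiming it as a parent.

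The only mildly delicate step is item (iv) above: certifying the ``virtual'' nodes $u'_i$ that live only inside the physical terminal $u_i$. The hard part is verifying that $u_i$ does not cheat by disagreeing with itself about the depth or the parent of $u'_i$; but since $u'_i$ is fully emulated by $u_i$, this reduces to an internal consistency check within a single node's certificate, which is trivially enforceable. Everything else is a textbook BFS-tree proof-labeling scheme, giving the claimed $O(\log |V|)$ certificate size.
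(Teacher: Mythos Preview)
Your proposal has a genuine gap in the certificate-size accounting. You write that each terminal stores ``the identities of the (at most $t-1$) terminals lying in the subtree of $T'$ rooted at itself,'' and then conclude that ``since all IDs are polynomial in $|V|$ and $t \le |V|$, all of this information fits in $O(\log |V|)$ bits.'' That inference is simply false: a list of up to $t-1$ identifiers, each of $O(\log|V|)$ bits, occupies $\Theta(t\log|V|)$ bits, and $t$ may be as large as $|V|$. So as written your certificates can be $\Theta(|V|\log|V|)$ bits, not $O(\log|V|)$. A related defect is that you hand this list only to terminals, yet your consistency check in item~(iv) has an internal terminal compare its list against ``the list received from each of its children''; non-terminal children carry no such list, so the descendant information cannot propagate through them to begin with.

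The fix is easy and is essentially what the paper's sketch does: do not try to certify the full BFS tree $T'$ and then reconstruct $T$ via descendant lists; certify $T$ directly. Give each node of $T$ its depth in $T$, its parent's ID, and the root's ID, and give each node \emph{outside} $T$ its distance to $T$ (plus, say, the ID of a neighbor realizing that distance). Every field is a single $O(\log|V|)$-bit integer. Truncation is then enforced without any terminal lists: a tree node with no tree-child must be flagged as a terminal (one bit), and every non-tree node must see a neighbor at strictly smaller distance to $T$. This is the standard proof-labeling-scheme argument from~\cite{KormanKP10} to which the paper defers, and it avoids the $t$-blowup entirely. If you insist on your route of certifying $T'$ first, replace the list of descendant terminals by a single boolean ``is there a terminal in my subtree?'' at \emph{every} node, verified against the disjunction over children; that also stays within $O(\log|V|)$ bits.
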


The term \emph{deterministic} in the above lemma means that the verification process is deterministic, which implies perfect completeness and perfect soundness (i.e., soundness error~$0$).
Roughly speaking, in this protocol each non-tree node will have a (non-quantum) label indicating its distance from the tree, and each tree node will have as label its depth in the tree, the ID of its parent, and the ID of the root.

\subsection{Certifying Equality in Trees}

Based on 
our tree construction from a graph and Lemma~\ref{lem:treeT2}, we can restrict our attention to the case in which the $t$ terminals $u_1,\dots,u_t$, who hold the $n$-bit strings $x_1,\dots,x_t$, belong to a tree~$T$ rooted at~$u_1$, of depth equal to $r+1$, where $r$ is the radius of the terminals, with maximum degree~$t$, and with leaves $u_2,\dots,u_t$. Moreover, we assume that 
the root~$u_1$ itself is of degree~1 due to our tree construction.   
We present a distributed quantum Merlin-Arthur protocol for the equality function $\EQ^{t}_{n}$ in this setting, and hence prove our main result.

\begin{theorem}\label{thm:tree}
	There is a distributed quantum Merlin-Arthur protocol on $T$ for 
	$\EQ^{t}_{n}$ between $t$ terminals of radius~$r$, 
	with perfect completeness,  soundness~$1/3$, certificate size $O(t\, r^2\, \log n )$ qubits, and message length $O(t\, r^2\, \log (n+r))$ qubits. 
\end{theorem}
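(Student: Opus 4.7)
The plan is to reduce the problem on $T$ to $t-1$ parallel instances of the path protocol from Theorem~\ref{th:general-one-two-sided}, instantiated for $\EQ_n$ via its one-sided one-way quantum communication protocol of cost $q=O(\log n)$. Since $u_1$ has degree~$1$ and $u_2,\ldots,u_t$ are leaves, there is, for each $i\in\{2,\ldots,t\}$, a unique $u_1$-to-$u_i$ path $P_i$ in $T$, of length at most $r+1$. I would run the path protocol on each $P_i$ in parallel: $u_1$ plays the role of the left-end party, preparing the quantum fingerprint $|h_{x_1}\rangle$ locally (it can prepare one independent copy per path, as it knows $x_1$); $u_i$ plays the role of the right-end party, performing the fingerprint POVM against $|h_{x_i}\rangle$; and every internal node of $P_i$ receives, as part of its certificate, the quantum state expected by the path protocol for $P_i$. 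Each node $v$ of $T$ is given one bundle of certificates per path $P_i$ that passes through it, and each edge $(v,w)$ carries one bundle of messages per path containing that edge. A node accepts iff every local SWAP test and every local POVM it executes across all paths it belongs to accepts.

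Perfect completeness will follow immediately from the completeness of the path protocol: when $x_1=\cdots=x_t$, the honest prover sends $|h_{x_1}\rangle$ in each slot of every internal node, all the SWAP tests succeed with probability $1$, and the POVM at each leaf $u_i$ accepts with probability $1$ because the one-sided one-way quantum protocol for $\EQ_n$ is perfectly complete. For soundness, suppose $\EQ^t_n(x_1,\ldots,x_t)=0$; then $x_{i^\ast}\neq x_1$ for some $i^\ast\in\{2,\ldots,t\}$, and I would invoke the soundness analysis of Section~\ref{sec:analysis} applied to the sub-protocol running on $P_{i^\ast}$ alone.

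The main obstacle is that the prover may entangle the certificates dedicated to $P_{i^\ast}$ with the certificates assigned to other paths, so one must verify that the per-path soundness guarantee survives such entanglement. The key observation is that every statement used in the soundness proof of Section~\ref{sec:analysis}, in particular Lemmas~\ref{lem:swaptest_close} and~\ref{lem:soundness}, is phrased in terms of the \emph{reduced} states on the registers actually touched by the SWAP tests and by the POVM of $P_{i^\ast}$. These reduced states are well-defined for any global quantum state, and the trace-distance inequalities that lead to per-path soundness $1/3$ are unaffected by tracing out the registers allocated to the other paths. Hence the path-protocol guarantee transfers intact, and some node on $P_{i^\ast}$ rejects with probability at least $2/3$, yielding the claimed soundness for the tree protocol.

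For the resource bounds, each node of $T$ lies on at most $t-1$ paths (one per leaf $u_i$, $i\geq 2$), so its certificate has size $(t-1)\cdot O(r^2\log n)=O(t r^2 \log n)$ qubits. Similarly, each edge of $T$ belongs to at most $t-1$ paths, and each path contributes a message of $O(r^2\log(n+r))$ qubits per communication step, for a total of $O(t r^2\log(n+r))$ qubits per edge. These match the bounds announced in Theorem~\ref{thm:tree}.
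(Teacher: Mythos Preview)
Your approach is correct and takes a genuinely different route from the paper's. The paper does \emph{not} run $t-1$ separate path instances. Instead it designs a single protocol ${\cal P}(\EQ^t_n)$ on the tree in which every non-terminal node receives just \emph{one} $O(\log n)$-qubit certificate, every non-root node forwards its state to its parent with probability~$1/2$, and a parent that kept its own state performs the SWAP test against \emph{one} of the states it received, chosen uniformly at random among those sent up by its children. Soundness is then argued, as in your proposal, by fixing a path $P_{i^\ast}$ with $x_{i^\ast}\neq x_1$ and reusing the analysis of Section~\ref{sec:analysis}; the only change is that the probability that a given node on $P_{i^\ast}$ actually tests the state coming from the correct child drops from $\Omega(1)$ to $\Omega(1/t)$, so the single-shot soundness error becomes $1-\Omega(1/(tr^2))$, and one amplifies with $O(tr^2)$ parallel repetitions.

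Your black-box reduction to Theorem~\ref{th:general-one-two-sided} is cleaner and more modular: you never re-open the soundness analysis, and your treatment of cross-path entanglement is sound (the marginal acceptance probability of the $P_{i^\ast}$ instance depends only on the reduced state on its registers and on its own independent coins, so the path soundness bound applies verbatim). The paper's approach, by contrast, shares a single certificate across all paths through a node and pays for this economy with the extra $1/t$ factor in the test probability; after amplification both designs land on the same $O(tr^2\log n)$ certificate size and $O(tr^2\log(n+r))$ message length. Your route has the minor bonus of avoiding a $\log t$ term in the per-repetition index overhead, while the paper's route uses smaller pre-amplification certificates; neither difference affects the asymptotic statement.
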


\begin{proof}
	Let $\pi$ be a one-way communication protocol for $\EQ_n$ transmitting $m=O(\log n)$~qubits such that, for every input pair $(x,y)$, if $x=y$ then $\pi$ outputs~$1$ with probability $1$ and if $x\neq y$ then $\pi$ outputs~$0$ with probability at least~$2/3$ (such a protocol exists, as mentioned in Section \ref{sec:prelim}).
	For input $(x,y)$, let $|h_x\rangle$ be the quantum message 
	from Alice to Bob in $\pi$, 
	and let $\{M_{y,1},M_{y,0}\}$ be the POVM measurement 
	done by Bob on $|h_x\rangle$ in $\pi$, 
	where $M_{y,1}$ corresponds to the measurement result $1$ (accept),  
	and $M_{y,0}$ corresponds to the measurement result $0$ (reject), 
	respectively. Our quantum Merlin-Arthur protocol is as follows.
	
	\begin{center}
		\fbox{
			\begin{minipage}{13cm}
				{\bf Protocol }${\cal P}(\EQ^t_n)$ {\bf for equality in trees} 
				\begin{enumerate}
					\item If $\EQ^t_n(x_1,\dots,x_t)=1$ then the prover sends 
					an $m$-qubit state equal to $|h_{x_1}\rangle$ 
					to each of the nodes $v$ that have no input. 
					\item For every $i\in\{1,\ldots,t\}$, 
					node $u_i$ prepares the $m$-qubit state $|h_{x_i}\rangle$.
					\item Every non-root node $v$ of the tree 
					chooses a bit $b_v$ uniformly at random. 
					If $b_v=0$, then $v$ sends its $m$-qubit state to its parent in~$T$. 
					\item For every non-terminal node $v$, 
					if $v$ receives a state from the children, 
					and if $b_v=1$, then $v$ performs the SWAP test 
					on the $2m$-qubit state that consists of the $m$-qubit state 
					received from the prover and an $m$-qubit state 
					received from the children, which is chosen uniformly 
					at random if he/she receives multiple $m$-qubit states from the children, 
					and accepts or rejects accordingly. Otherwise, $v$~accepts. 
					\item If the root node $u_1$ receives a state from its children, 
					then $u_1$ performs the POVM measurement $\{M_{x_1,1},M_{x_1,0}\}$ 
					on an $m$-qubit state received from the children, 
					which is chosen uniformly at random 
					if he/she receives multiple $m$-qubit states from the children,  
					and accepts or rejects accordingly. Otherwise, $u_1$~accepts. 
				\end{enumerate}
			\end{minipage}
		}
	\end{center}
	
	The perfect completeness trivially holds 
	since every local test yields acceptance with certainty.
	For the soundness, if $\EQ^t_n(x_1,\dots,x_t)=0$ then 
	there is a leaf $u_i$, $i>1$, with $x_i\neq x_1$. 
	Then, we can perform almost the same analysis as in Section~\ref{sec:QDPP}, 
	but for the path connecting $u_1$ and $u_i$ in~$T$. 
	The only difference is the probability that each local test occurs; 
	it is at least $1/4$ in the analysis of ${\cal P}_\pi$ done in Section~\ref{sec:QDPP}, 
	while it is at least $(1/4)\cdot(1/t)$ in the protocol ${\cal P}(\EQ^t_n)$ we are now considering, 
	as every non-terminal node $v_j$ or $u_1$ on the path chooses 
	the $m$-qubit state from the child on the path uniformly at random 
	from the multiple $m$-qubit states (possibly) sent from all the children.  
	Hence, ${\cal P}(\EQ^t_n)$ has soundness error $1-O(1/tr^2)$. 
	The proof is completed by performing $O(tr^2)$ parallel repetitions of ${\cal P}(\EQ^t_n)$ 
	for error reduction, similarly to the $O(r^2)$ parallel repetitions of ${\cal P}_\pi$ in Section~\ref{sec:QDPP}.
\end{proof}

\subparagraph{Remark.} 

Using Lemma~\ref{lem:treeT2}, we get that, up to adding $O(\log|V|)$ classical bits in the certificates of the nodes,  Theorem~\ref{thm:tree} can be extended to the case where the terminals are in a connected graph $G=(V,E)$. 


\section{Classical Lower Bounds}
\label{sec:classicLWB}

In this section, we show that non-quantum distributed Merlin-Arthur (\dMA) protocols for distant $\EQ$ require certificates of linear size. In fact, we establish a more general lower bound which applies to all functions $f$ with large fooling set, even using shared randomness. 
In addition, the bound holds for settings which allow the graph nodes to have multiple communication rounds among them, after receiving the certificates and before deciding if they finally accept (see, e.g.,~\cite{FeuilloleyFHPP18,OstrovskyPR17}).

For the lower bound, it is sufficient to consider the path $v_0,\dots,v_r$ in which $v_0$ and $v_r$ are provided with inputs~$x$ and~$y$, respectively. 

\begin{theorem}\label{thm:clb}
	Let $r\geq 2\mu+1$, 
	and let $f(x,y)$ be any Boolean function 
	with a $1$-fooling set of size at least~$k$. 
	Let ${\cal P}$ be a classical Merlin-Arthur protocol for $f$ 
	in a path of $r$~edges,
	with $\mu$ rounds of communication among the nodes,
	shared randomness, 
	certificates of size $\lfloor \frac{1}{2\mu}\log (k-1)\rfloor$ bits, 
	and completeness~$1-p$. 
	Then ${\cal P}$ has soundness error at least~$1-2p$.
\end{theorem}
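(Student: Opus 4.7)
My plan is a pigeonhole and splicing argument on the honest certificates assigned to pairs in the 1-fooling set. Fix a 1-fooling set $S=\{(x_i,y_i)\}_{i=1}^{k}$ for~$f$, and for each $i$ let $C_i$ denote a certificate assignment for the input $(x_i,y_i)$ that achieves acceptance probability at least $1-p$; such a $C_i$ exists by completeness. The key structural fact is that, since $r\geq 2\mu+1$, taking the cut position $c=\mu$ in the path $v_0,\dots,v_r$ decouples the two inputs: after $\mu$ rounds of communication, the view of any node $v_a$ with $a\leq\mu$ depends only on data within distance $\mu$, hence does not see the input at $v_r$ (which is at distance $r-a\geq\mu+1$), and symmetrically any node $v_a$ with $a\geq \mu+1$ does not see the input at $v_0$.

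I would then restrict attention to the certificates on the $2\mu$ ``middle'' nodes $v_1,\dots,v_{2\mu}$. Each such restriction is described by at most $2\mu\cdot\lfloor\tfrac{1}{2\mu}\log(k-1)\rfloor\leq \log(k-1)$ bits, so it takes fewer than $k$ possible values. By pigeonhole there exist two indices $i\neq j$ such that $C_i$ and $C_j$ agree on $v_1,\dots,v_{2\mu}$. By the 1-fooling-set property, I may assume without loss of generality that $f(x_i,y_j)=0$. The cheating certificate $C^\ast$ for the NO instance $(x_i,y_j)$ is then defined by splicing: $C^\ast(v_a)=C_i(v_a)$ for $a\leq \mu$, and $C^\ast(v_a)=C_j(v_a)$ for $a>\mu$.

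The heart of the argument is a view-coincidence claim: under any fixed outcome of the shared (and any private) randomness, the $\mu$-round view of every node $v_a$ in the cheating execution on input $(x_i,y_j)$ equals its view in an honest execution. More precisely, for $a\leq\mu$ the ball of radius $\mu$ around $v_a$ lies in $\{v_0,\dots,v_{2\mu}\}$; on positions $[0,\mu]$ the spliced certificate equals $C_i$ by construction, on $[\mu+1,2\mu]$ it equals $C_j=C_i$ by the pigeonhole agreement, and the only visible input is $x_i$ at $v_0$, so the view coincides with the honest $(x_i,y_i)$ execution. For $a\geq\mu+1$, the ball excludes $v_0$, the spliced certificate equals $C_j$ on $[\mu+1,r]$ and $C_i=C_j$ on $[1,\mu]$, and the only visible input is $y_j$ at $v_r$, so the view coincides with the honest $(x_j,y_j)$ execution. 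A routine induction on rounds then shows that messages and local states propagate identically to those two honest executions.

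Let $L$ be the event that $v_0,\dots,v_\mu$ all accept in the honest $(x_i,y_i)$ execution, and $R$ the event that $v_{\mu+1},\dots,v_r$ all accept in the honest $(x_j,y_j)$ execution, both coupled under the same random string. Completeness gives $\Pr[L]\geq 1-p$ and $\Pr[R]\geq 1-p$, so a union bound yields $\Pr[L\cap R]\geq 1-2p$. By the view-coincidence, the probability that all nodes accept $C^\ast$ on $(x_i,y_j)$ is exactly $\Pr[L\cap R]$, which lower bounds the soundness error by $1-2p$. The main obstacle I anticipate is making the view-coincidence argument fully rigorous---verifying carefully that every piece of information reachable within $\mu$ hops (certificates, inputs, and the coins consumed by the relevant nodes) matches one of the two honest executions---and, relatedly, justifying that the left and right honest-execution events can be coupled under a single random string so that the union bound applies.
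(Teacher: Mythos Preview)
Your proposal is correct and follows essentially the same approach as the paper: pigeonhole on the certificates at $v_1,\dots,v_{2\mu}$ to find two fooling-set pairs with coinciding middle certificates, splice the two honest assignments, use the $\mu$-hop locality to argue view-indistinguishability on each side of the cut, and finish with a union bound over the shared randomness. The paper's write-up is terser on the view-coincidence step, but the argument and the cut location are the same as yours.
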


\begin{proof}
	Consider the path $v_0,\ldots,v_r$ with a fixed identifier assignment, and inputs $x$ and $y$ given to $v_0$ and $v_r$, respectively.
	Since $f$ has large $1$-fooling set but small certificates, 
	there exist two distinct pairs of ``fooling'' inputs 
	that have the same certificate assignments on the~$2\mu$ node $v_1,\ldots,v_{2\mu}$.
	That is, we can fix two input pairs $(x,y)$ and $(x',y')$,
	with $f(x,y)=f(x',y')=1$, and, w.l.o.g., $f(x,y')=0$, 
	with corresponding certificate assignments~$w$ and~$w'$, such that 
	\begin{equation*}\label{eq:clb}
		w(v_i)=w'(v_i) \; \mbox{for every} \; i\in\{1,\ldots,2\mu\},
	\end{equation*}
	where $w(v_i)$ and $w'(v_i)$ are the certificate assigned to the node $v_i$ in the assignments~$w$ and~$w'$, respectively.
	
	We interpret the outputs as Boolean values, where $\mbox{accept}=1$ and $\mbox{reject}=0$, and denote by $\mathsf{out}_i(x,y,w)$ the output of $v_i$ when the inputs are $x$ and $y$ and the certificate assignment is $w$.
	Since ${\cal P}$ has completeness $1-p$, we have
	\[
	\Pr_s \big [\bigwedge_{i\leq \mu}\mathsf{out}_i(x,y,w)=1 \wedge \bigwedge_{i\geq \mu+1} \mathsf{out}_i(x,y,w)=1 \big] \geq 1-p,
	\]
	and the same holds for $(x',y',w')$. Hence,
	\[
	\Pr_s \Big[\bigwedge_{i\leq \mu}   \mathsf{out}_i(x,y,w)=1  \Big] \geq 1-p,
	\mbox{ and } 
	\Pr_s \Big[\bigwedge_{i\geq \mu+1} \mathsf{out}_i(x',y',w')=1 \Big] \geq 1-p.
	\]
	
	The output $\mathsf{out}_i$ of every node $v_i$ is a function of its own identifier and certificate, the certificates of the nodes in its distance-$\mu$ neighborhood, and the public random string~$s$.
	In addition, 
	the outputs of $v_0,v_1,\ldots,v_\mu$ may also depend on the input~$x$ to $v_0$, and the outputs of $v_{r-\mu},\ldots,v_{r}$ may also depend on the input~$y$ to~$v_r$.
	Formally speaking, the outputs can also depend on the identifiers of the neighbors, but these are fixed given the node's identifier, so we ignore them henceforth.
	
	Let $w''$ be the certificate assignment defined by
	\begin{equation*}\label{eq: combined crtificates}
		\begin{array}{ll}
			w''(v_0)=w(v_0);\\
			w''(v_i)=w(v_i)=w'(v_i) &\mbox{ for } i\in\{1,\ldots, 2\mu\};\\
			w''(v_i)=w'(v_i) &\mbox{ for } i\in\{2\mu+1,\ldots,r\}.
		\end{array}
	\end{equation*}
	Consider the input assignment $(x,y')$ combined with the certificate assignment~$w''$.
	The definition of $w''$ implies that nodes $v_0,\ldots,v_{2\mu}$ receive the same certificates as in $w$, and thus nodes $v_0,\ldots,v_{\mu}$ cannot distinguish this form the input assignment $(x,y)$ with certificates assignment $w$.
	On the other hand, nodes $v_1,\ldots,v_r$ receive the same certificates as in $w'$, so the nodes $v_{\mu+1},\ldots,v_r$ cannot distinguish this form the input assignment $(x',y')$ with certificates assignment $w'$.
	
	A union bound finishes the proof:
	\begin{align*}
		&\Pr_s \Big [  	\bigwedge_{i\leq \mu}   \mathsf{out}_i(x,y',w'')=1 
		\wedge  		\bigwedge_{i\geq \mu+1} \mathsf{out}_i(x,y',w'')=1 
		\Big] \\
		=  	&\Pr_s \Big [  	\bigwedge_{i\leq \mu}   \mathsf{out}_i(x,y,w)=1 
		\wedge  		\bigwedge_{i\geq \mu+1} \mathsf{out}_i(x',y',w')=1 
		\Big] \\
		\geq   &1- \Pr_s \Big [ \neg \bigwedge_{i\leq \mu}   \mathsf{out}_i(x,y,w)=1 \Big] 
		- \Pr_s \Big [ \neg \bigwedge_{i\geq \mu+1} \mathsf{out}_i(x',y',w')=1  \Big] 	 \\
		\geq & \; 1- 2p. 
	\end{align*}
	That is, we found a certificate assignment $w''$ for the input $(x,y')$ which makes all node accept with probability at least~$1-2p$, even though $f(x,y')=0$. 
	Hence, the soundness error is at least~$1-2p$, as claimed.
\end{proof}

Since $\EQ_n$ has a $1$-fooling set of size $2^n$, the corollary below follows directly from Theorem~\ref{thm:clb}. The case $r=3$ and $\mu=1$ gives a linear lower bound for $\dMA$ protocols on a 4-node path.

\begin{corollary}\label{cor:LB}
	For every $r\geq 2\mu+1$, every distributed (classical) Merlin-Arthur protocol for $\EQ^2_n$ with $\mu$ rounds of communication among the nodes
	in the path of $r$ edges with certificates of size at most $\lfloor (n-1)/2\mu\rfloor$ bits, and completeness $1-p$ has soundness error at least~$1-2p$.
\end{corollary}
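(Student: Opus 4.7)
The plan is to apply Theorem~\ref{thm:clb} directly with $f = \EQ_n$, so the whole work consists in checking the two hypotheses of that theorem: exhibiting a large enough 1-fooling set, and verifying that the certificate-size bound in the corollary fits within the one required by the theorem.

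First I would exhibit a 1-fooling set of size $2^n$ for $\EQ_n$, using the natural ``diagonal'' family $S = \{(x,x) : x \in \{0,1\}^n\}$. Every element $(x,x) \in S$ satisfies $\EQ_n(x,x) = 1$, and for any two distinct elements $(x,x), (x',x') \in S$ one has $x \neq x'$, so $\EQ_n(x,x') = \EQ_n(x',x) = 0$. Hence $S$ is a 1-fooling set of size $k = 2^n$, as desired.

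Next, I would match the certificate-size bounds. Theorem~\ref{thm:clb} permits certificates of size $\lfloor \log(k-1)/(2\mu) \rfloor$ bits, and with $k = 2^n$ the inequality $\log(2^n - 1) \geq n - 1$ yields $\lfloor (n-1)/(2\mu) \rfloor \leq \lfloor \log(2^n - 1)/(2\mu) \rfloor$. Therefore every protocol satisfying the hypothesis of the corollary also satisfies the hypothesis of Theorem~\ref{thm:clb}, and the conclusion ``soundness error at least $1 - 2p$'' of the theorem transfers verbatim to the corollary. There is no real obstacle here: the only mildly delicate point is the floor operation on the certificate size, which is handled by the elementary estimate $\log(2^n-1) \geq n-1$ for $n \geq 1$.
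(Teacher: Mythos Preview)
Your proposal is correct and matches the paper's approach exactly: the paper simply notes that $\EQ_n$ has a $1$-fooling set of size $2^n$ and invokes Theorem~\ref{thm:clb}. Your explicit verification that $\lfloor (n-1)/(2\mu)\rfloor \le \lfloor \log(2^n-1)/(2\mu)\rfloor$ via $\log(2^n-1)\ge n-1$ is a welcome bit of care that the paper leaves implicit.
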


The requirements for a good protocol is to have a high completeness (i.e., small value for~$p$,  ideally $p=0$) and a reasonably small soundness error. Corollary \ref{cor:LB} precisely shows that for the equality function such protocols cannot exist in the classical setting unless the certificate size is linear in $n$.

\subparagraph{Remark.} 

The completeness-soundness gap of Theorem~\ref{thm:clb} is optimal in general, in the sense that it 
cannot be improved for $\EQ^2_1$, i.e., distant equality between two input bits. 
Consider the following protocol ${\cal P}$ for $\EQ^2_1$, on input $(x,y)\in\{0,1\}\times\{0,1\}$. It uses a shared random variable $X\in \{-1,0,1\}$
with $\Pr[X=0]=\Pr[X=1]=p$, and $\Pr[X=-1]=1-2p$. 
In the case $X=-1$, all the nodes accept. In the case $X\in\{0,1\}$, $v_0$~accepts whenever $X=x$,
$v_r$~accepts whenever $X=y$, and all the other nodes accept.  
If $x=y$, the probability that all the nodes accept is $1-2p+(1/2)\cdot (2p)=1-p$.
If $x\neq y$, then either $v_0$ or $v_r$ systematically rejects for $X\neq -1$, and thus  
the probability that all the nodes accept is $1-2p$.


\section{Conclusion}

In this paper, we extended the notion of randomized proof-labeling scheme to the quantum setting. We showed the efficiency of distributed quantum certification mechanisms by designing a distributed quantum Merlin-Arthur protocol for $\EQ_n^t$ between $t$ parties spread out in a graph, using certificates and messages whose size depend logarithmically on~$n$, the size of the data. This is in contrast to classical distributed Merlin-Arthur protocols, which require certificates of size linear in~$n$, even when messages of unbounded size can be used. 
Our result was obtained by using an interesting property of the SWAP test: it can be applied for checking proximity properties between reduced states.

Which other Boolean predicates on labeled graphs, beyond equality, could take benefit from quantum resources for the design of compact distributed certification schemes is an intriguing question. Theorem \ref{th:general-one-two-sided} gives a partial answer on the path. A complete answer to this question would significantly help improving our comprehension of  the power of quantum computing in the distributed setting.

\bibliography{dQMA-bib}


\appendix

\

\centerline{\Large\bf A P P E N D I X}

\section{Quantum Fundamentals}
\label{sec:QuantFund}

Here we summarize some notation and properties that are used in this paper.
As a terminology, we sometime identify physical concepts (such as pure states) 
and their mathematical representations (such as vectors). 

A {\em mixed state} on a complex Euclidian space ${\cal H}$ 
is considered as a representation of a probability distribution 
of {\em pure} quantum states (represented like $|\psi\rangle$ as vectors on ${\cal H}$).
If the quantum state in ${\cal H}$ is a pure state $|\psi_j\rangle$ with probability $p_j$, 
its mixed state is represented as the positive semi-definite matrix $\sigma=\sum_j p_j |\psi_j\rangle\langle\psi_j|$
(the symbol $\rho$ or $\sigma$ is often used for representing a mixed state). 
In particular, a (pure) quantum state $|\psi\rangle$ is written as the projector $|\psi\rangle\langle\psi|$ of rank $1$. 
If this state evolves by a unitary operation $U$,
the state changes into $U\sigma U^\dagger=\sum_j U|\psi_j\rangle\langle \psi_j|U^\dagger$. 
We call a mixed state simply a (quantum) state (as far as we do not care about the difference between pure and mixed states).

For any complex Euclidian spaces ${\cal H}_A$ and ${\cal H}_B$ and any matrix $M$ on ${\cal H}_A\otimes{\cal H}_B$, 
the reduced matrix on ${\cal H}_A$ obtained by tracing out on ${\cal H}_B$, denoted as $\tr_{B}(M)$ is   
\[
\tr_{B}(M)=\sum_{j} (I \otimes \langle j|)M(I\otimes |j\rangle),
\]
where $\{|j\rangle\}$ is the standard basis in ${\cal H}_B$ (in fact, it may be any orthonormal basis).
If $M$ represents a mixed state $\sigma$, $\tr_B(\sigma)$ is called 
the {\rm reduced state} on ${\cal H}_A$, which represents the locally visible state 
on ${\cal H}_A$ of $\sigma$ obtained by disregarding the part on ${\cal H}_B$ of $\sigma$.

A {\em POVM (positive operator valued measure)} on a complex Euclidian space ${\cal H}$ represents a general measurement 
on a quantum state on ${\cal H}$. 
In particular, a binary-valued POVM (which we use in this paper) on ${\cal H}$ is a set 
$\{M_0,M_1\}$ that consists of two positive semi-definite matrices $M_0$ and $M_1$ on ${\cal H}$ 
such that $M_0+M_1=I$. If a mixed state $\rho$ is measured by $\{M_0,M_1\}$,
the probability that the outcome with $M_j$ ($j=0,1$) is obtained is $\tr(M_j\rho)$. 

For any matrix $M$ on complex Euclidian space ${\cal H}$, 
the trace norm of $M$ is defined as
\[
\|A\|_{\tr}=\tr(\sqrt{M^\dagger M}).
\]
For any two mixed states $\rho$ and $\sigma$ in ${\cal H}$, 
the trace distance between $\rho$ and $\sigma$ is defined as
\[
\mathsf{dist}(\rho,\sigma)=\frac{1}{2}\|\rho-\sigma\|_{\tr},
\]
which satisfies that $0\leq \mathsf{dist}(\rho,\sigma)\leq 1$ (as $\mathsf{dist}(\rho,\sigma)$ becomes close to $0$, $\rho$ and $\sigma$ become close).  
In particular, for any pure states $|\psi\rangle,|\phi\rangle$, it holds that 
\[
\mathsf{dist}(|\psi\rangle\langle\psi|,|\phi\rangle\langle\phi|)=\sqrt{1-|\langle\psi|\phi\rangle|^2}.
\]
An important characterization of the trace distance is
\[
\mathsf{dist}(\rho,\sigma)=\max_{M} \tr(M(\rho-\sigma)),
\]
where the maximization is taken over all positive semi-definite matrices $M$ 
such that $M\leq I$. This characterization means 
that $\mathsf{dist}(\rho,\sigma)$ is equal to the difference between 
the probability that the ``best'' POVM measurement $\{M,I-M\}$, 
for distinguishing $\rho$ and $\sigma$, on the state $\rho$ gives the outcome with the POVM element $M$, 
$\tr(M\rho)$, and the probability that the same POVM measurement on $\sigma$ 
gives the outcome with $M$, $\tr(M\sigma)$. 

For any two mixed states $\rho$ and $\sigma$ in ${\cal H}$, 
the fidelity between $\rho$ and $\sigma$, denoted as $F(\rho,\sigma)$, 
is another measure of their closeness, and it holds that $0\leq F(\rho,\sigma)\leq 1$ 
(as $F(\rho,\sigma)$ becomes close to $1$, $\rho$ and $\sigma$ become close).   
For this measure, we need only the following equality, which is found in Ref.~\cite{Wat18book} for instance. 

\begin{lemma}[Corollary 3.23 in Ref.~\cite{Wat18book}]\label{lem:tracenorm}
	Let $|\psi\rangle$ and $|\phi\rangle$ be two pure states on ${\cal H}\otimes {\cal H}'$ 
	for complex Euclidian spaces ${\cal H}$ and ${\cal H}'$. It holds that 
	\[
	F( \tr_{\cal H'}(|\psi\rangle\langle\psi|), \tr_{\cal H'}(|\phi\rangle\langle\phi|) ) 
	= \| \tr_{\cal H}( |\phi\rangle\langle\psi| ) \|_{\tr}. 
	\]
\end{lemma}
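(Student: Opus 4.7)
The plan is to derive this identity by combining Uhlmann's theorem for the fidelity with the variational characterization of the trace norm. Both are standard facts, so the proof reduces to a short manipulation involving the partial trace.

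First, I would invoke Uhlmann's theorem. For any two density operators $\rho,\sigma$ on ${\cal H}$, with purifications $|\psi\rangle, |\phi\rangle \in {\cal H} \otimes {\cal H}'$, and provided ${\cal H}'$ is at least as large as the support of $\sigma$, Uhlmann's theorem yields
\[
F(\rho,\sigma) \;=\; \max_U \bigl|\langle\psi|(I_{\cal H} \otimes U)|\phi\rangle\bigr|,
\]
where $U$ ranges over unitaries on ${\cal H}'$. Here I apply this with $\rho = \tr_{{\cal H}'}(|\psi\rangle\langle\psi|)$ and $\sigma = \tr_{{\cal H}'}(|\phi\rangle\langle\phi|)$, which are exactly the states appearing in the left-hand side of the lemma.

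Next, I would rewrite each such inner product as a trace of an operator living only on ${\cal H}'$. Using the identity $\tr\bigl((I_{\cal H} \otimes U) A\bigr) = \tr\bigl(U \cdot \tr_{\cal H}(A)\bigr)$ with $A = |\phi\rangle\langle\psi|$, we obtain
\[
\langle\psi|(I_{\cal H} \otimes U)|\phi\rangle \;=\; \tr\bigl(U \cdot \tr_{\cal H}(|\phi\rangle\langle\psi|)\bigr).
\]
Setting $M := \tr_{\cal H}(|\phi\rangle\langle\psi|)$, Uhlmann's theorem then reads $F(\rho,\sigma) = \max_U |\tr(UM)|$.

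Finally, I would close the loop through the variational characterization of the trace norm: for every operator $M$ on ${\cal H}'$, $\|M\|_{\tr} = \max_U |\tr(UM)|$, with the maximum over unitaries. This identity is an immediate consequence of the polar decomposition $M = V|M|$: one has $|\tr(UM)| = |\tr(UV|M|)| \leq \tr(|M|) = \|M\|_{\tr}$, with equality attained at $U = V^\dagger$. Chaining the three steps gives $F(\rho,\sigma) = \|M\|_{\tr} = \|\tr_{\cal H}(|\phi\rangle\langle\psi|)\|_{\tr}$, which is exactly the claim.

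The only technical point I expect to require care is the dimension hypothesis in Uhlmann's theorem: for the maximum to be attained, and not merely approached, one needs that every purification of $\sigma$ can be realized as $(I_{\cal H} \otimes U)|\phi\rangle$ for some unitary $U$ on ${\cal H}'$, which is guaranteed exactly when $\dim {\cal H}' \geq \rank(\sigma)$. Since the lemma already posits $|\phi\rangle$ as a pure state of ${\cal H} \otimes {\cal H}'$, this is automatic. If one wished to avoid Uhlmann altogether, an alternative route is to plug in Schmidt decompositions of $|\psi\rangle$ and $|\phi\rangle$ and verify, by direct computation, that both sides equal the trace norm of the explicit ``cross-Gram'' matrix $N_{ij} = \sqrt{\lambda_i \mu_j}\,\langle a_i | c_j\rangle$ built from the Schmidt coefficients and Schmidt vectors on ${\cal H}$; this is the backup plan should the variational route encounter an unforeseen obstacle.
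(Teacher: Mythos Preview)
Your argument is correct and is essentially the standard derivation of this identity. Note, however, that the paper does not prove this lemma at all: it is stated as a quotation from Watrous's textbook (Corollary~3.23 in Ref.~\cite{Wat18book}) and used as a black box. Your Uhlmann-plus-variational-trace-norm route is precisely the proof given in that reference, so there is nothing to compare against in the paper itself.
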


Let ${\cal H}_1$ and ${\cal H}_2$ be two complex Euclidean spaces consisting of $m$ qubits for each 
(namely, each of the space has the standard basis states $\{|x\rangle\mid x\in\{0,1\}^m\}$).
Then, ${\cal H}={\cal H}_1\otimes {\cal H}_2$ can be written as the direct sum of 
the symmetric space ${\cal H}_S$ and the antisymmetric subspace ${\cal H}_A$,
Here, the symmetric space ${\cal H}_S$ is the subspace of ${\cal H}$ 
spanned by the states $|\psi\rangle$ in ${\cal H}$ such that $S|\psi\rangle=|\psi\rangle$ where $S$ is the swap operator 
defined by $S(|j_1\rangle|j_2\rangle)=|j_2\rangle|j_1\rangle$. 
The antisymmetric subspace is the subspace $|\psi\rangle$ in ${\cal H}$ such that 
$S|\psi\rangle=-|\psi\rangle$. Note that ${\cal H}_A$ is the orthogonal complement of ${\cal H}_S$.
The dimensions of ${\cal H}_S$ and ${\cal H}_A$ are $M(M+1)/2$ and $M(M-1)/2$, respectively, where $M=2^m$. 
See Ref.~\cite{Wat18book}, for instance, for more information on (bipartite as well as multipartite) symmetric states and anti-symmetric states.

\section{Elementary Lemma on Probability}\label{sec:ProbFund}

For any events $A,B$, we denote the complementary event of $A$ by $\overline{A}$ or $\neg A$, 
the sum event of $A$ and $B$ by $A\vee B$, and the product event of $A$ and $B$ by $A\wedge B$. 
The following lemma is basic on probability, while we give it with the proof for the self-containment.

\begin{lemma}\label{lem:probability}
	Let $A_j$ ($j=1,2,\ldots,n$) be an event. Then, following holds.
	\begin{itemize}
		\item $\Pr[A_1\wedge A_2\wedge \cdots \wedge A_n]\leq \frac{1}{n}\sum_{j=1}^n \Pr[A_j]$
		\item $\Pr[A_1\vee A_2\vee \cdots \vee A_n]\geq \frac{1}{n}\sum_{j=1}^n \Pr[A_j]$
	\end{itemize}
\end{lemma}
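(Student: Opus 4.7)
The plan is to derive both inequalities from monotonicity of probability together with a trivial averaging argument, which is the natural way to handle such bounds.

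For the first inequality, I would observe that for every $j\in\{1,\dots,n\}$, the event $A_1\wedge A_2\wedge\cdots\wedge A_n$ is contained in the event $A_j$ (since the intersection of $n$ events implies each one of them). Monotonicity of probability then gives $\Pr[A_1\wedge\cdots\wedge A_n]\leq \Pr[A_j]$ for every such $j$. Summing these $n$ inequalities and dividing both sides by $n$ yields
\[
\Pr[A_1\wedge A_2\wedge\cdots\wedge A_n]\leq \frac{1}{n}\sum_{j=1}^{n}\Pr[A_j],
\]
as desired.

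For the second inequality, I would dually observe that for every $j\in\{1,\dots,n\}$, the event $A_j$ is contained in the event $A_1\vee A_2\vee\cdots\vee A_n$. Hence $\Pr[A_j]\leq \Pr[A_1\vee\cdots\vee A_n]$ for every $j$, and summing these $n$ inequalities followed by division by $n$ gives
\[
\frac{1}{n}\sum_{j=1}^{n}\Pr[A_j]\leq \Pr[A_1\vee A_2\vee\cdots\vee A_n],
\]
as required.

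Since both parts only use the inclusion relations between conjunctions/disjunctions and their constituent events, together with monotonicity of probability, there is no real obstacle. I do not foresee any step requiring more than an elementary manipulation; in particular, no independence or union bound (which would go in the opposite direction for the disjunction) is needed.
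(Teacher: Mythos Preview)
Your argument is correct and in fact more transparent than the paper's. The paper proves the first inequality by induction on $n$, writing $\Pr[A_1\wedge\cdots\wedge A_n]$ as a weighted combination of two expressions involving conditional probabilities and then invoking the induction hypothesis; it then derives the second inequality from the first via De~Morgan's laws, applying the first item to the complements $\overline{A_j}$. Your route bypasses both the induction and the complementation: you simply use the containments $A_1\wedge\cdots\wedge A_n\subseteq A_j\subseteq A_1\vee\cdots\vee A_n$ and average. This is shorter and conceptually cleaner, and it makes explicit that no structure beyond monotonicity is needed. The paper's approach does have the minor virtue of showing how the two items are formally dual to each other, but your direct argument already exhibits that duality just as plainly.
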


\begin{proof}
	We show the first item by induction. The base case trivially holds. Assume that the case $n-1$ holds.
	Then,
	\begin{align*}
		&\!\!\!\! \Pr[A_1\wedge \cdots \wedge A_n]\\
		&=\frac{n-1}{n} \Pr[A_1\wedge\cdots \wedge A_{n-1}]\Pr[A_{n}|A_1\wedge\cdots \wedge A_{n-1}]+\frac{1}{n} \Pr[A_n]\Pr[A_1\wedge\cdots \wedge A_{n-1}|A_n]\\
		&\leq \frac{n-1}{n} \Pr[A_1\wedge\cdots \wedge A_{n-1}]+ \frac{1}{n} \Pr[A_n]\\
		&\leq  \frac{n-1}{n} \cdot \frac{1}{n-1} \sum_{j=1}^{n-1} \Pr[A_j] + \frac{1}{n} \Pr[A_n]\\
		&=\frac{1}{n}\sum_{j=1}^{n} \Pr[A_j],
	\end{align*}
	where the last inequality comes from the assumption. Thus, the case $n$ holds, and the induction is completed.
	
	The second item is proved by
	\begin{align*}
		\Pr[A_1\vee \cdots \vee A_n] 
		&= 1-\Pr[\overline{A_1}\wedge\cdots\wedge \overline{A_n}]\\
		&\geq 1-\frac{1}{n}\sum_{j=1}^n \Pr[\overline{A_j}]\\
		&= \frac{1}{n}\sum_{j=1}^n \Pr[A_j],
	\end{align*}
	where the inequality comes from the first item.
\end{proof}

\end{document}